\newtheorem{theorem}{Theorem}[section]
\newtheorem{proposition}[theorem]{Proposition}
\newtheorem{lemma}[theorem]{Lemma}
\newtheorem{corollary}[theorem]{Corollary}
\theoremstyle{definition}
\newtheorem{definition}[theorem]{Definition}
\newtheorem{example}[theorem]{Example}
\newtheorem{note}[theorem]{Remark}
\newtheorem*{axioms}{Axioms}
\newtheorem*{syntax}{Syntax}
\newtheorem*{semantics}{Semantics}
\newcommand{\defn}[1]{Definition~\ref{defn:#1}}
\newcommand{\fig}[2][]{Figure~\ref{fig:#2}\ensuremath{#1}}
\newcommand{\eq}[1]{(\ref{eq:#1})}
\newcommand{\ex}[1]{Example~\ref{ex:#1}}
\newcommand{\secn}[1]{Section~\ref{sec:#1}}
\newcommand{\lem}[1]{Lemma~\ref{lem:#1}}
\newcommand{\axs}[1]{Axioms~\ref{ax:#1}}
\newcommand{\axss}[2]{Axioms~\ref{ax:#1}, \ref{ax:#2}}
\newcommand{\ax}[1]{Axiom~\ref{ax:#1}}
\newcommand{\prop}[1]{Proposition~\ref{prop:#1}}
\newcommand{\mycaption}[1]{\caption{#1.}}
\newcommand{\mdash}[1][]{---#1}
\newcommand{\ie}{i.e.\ }
\newcommand{\eg}{e.g.\ }
\newcommand{\cf}{cf.\ }
\newcommand{\bydef}[1]{\ensuremath{\stackrel{\Delta}{#1}}}
\newcommand{\setdef}[2]{\ensuremath{\{{#1}\,|\,{#2}\}}}
\newcommand{\Setdef}[2]{\ensuremath{\Big\{{#1}\,\Big|\,{#2}\Big\}}}
\newcommand{\fire}[1]{\ensuremath{\dot{#1}}}
\newcommand{\act}[1]{\ensuremath{#1}}
\newcommand{\inhib}[1]{\ensuremath{\overline{#1}}}
\newcommand{\non}[1]{\ensuremath{\overline{#1}}}
\newcommand{\firesup}[1]{\ensuremath{\mathbf{fire}{\left(#1\right)}}}
\newcommand{\actsup}[1]{\ensuremath{\mathbf{act}{\left(#1\right)}}}
\newcommand{\negsup}[1]{\ensuremath{\mathbf{neg}{\left(#1\right)}}}
\newcommand{\offer}[1][]{\ensuremath{\!\uparrow\!{#1}}}
\newcommand{\noffer}[1][]{\ensuremath{\!\not\,\uparrow\!{#1}}}
\newcommand{\goesto}[1][]{\ensuremath{\stackrel{#1}{\rightarrow}^{}}}
\newcommand{\longgoesto}[1][]{\ensuremath{\stackrel{#1}{\longrightarrow}}}
\newcommand{\symmdiff}[2]{\ensuremath{{#1}\triangle{#2}}}
\newcommand{\glue}[1][]{\ensuremath{{\cal G}^{#1}}}
\newcommand{\less}{\prec}
\newcommand{\true}{\ensuremath{\mathtt{tt}}}
\newcommand{\false}{\ensuremath{\mathtt{ff}}}
\newcommand{\intsem}[1]{\ensuremath{\|{#1}\|}}
\newcommand{\aisem}[1]{\ensuremath{|{#1}|}}
\newcommand{\ai}{\ensuremath{\mathcal{A\hspace{-0.6ex}I\!}}}
\newcommand{\ct}{\ensuremath{\mathcal{T\!}}}
\newcommand{\cru}{\ensuremath{\mathcal{CR\!}}}
\newcommand{\ac}{\ensuremath{\mathcal{AC}\!}}
\newcommand{\cA}{\ensuremath{\mathcal{A}}}
\newcommand{\sB}{\ensuremath{\mathbb{B}}}
\newcommand{\bB}{\ensuremath{\mathbf{B}}}
\newcommand{\cB}{\ensuremath{\mathcal{B}}}
\newcommand{\derrule}[3][1]{%
  \ensuremath{%
    \begin{array}{*{#1}{@{\hspace{2mm}}c@{\hspace{2mm}}}}
      #2\\
      \hline
      \multicolumn{#1}{c}{#3}
    \end{array}%
  }%
}
\newlength{\algowidth}
\title{Extended Connectors: Structuring Glue Operators in BIP}
\author{Eduard Baranov and Simon Bliudze
\institute{\'Ecole Polytechnique F\'ed\'erale de Lausanne\\
Rigorous System Design Laboratory\\
INJ Building, Station 14, 1015 Lausanne, Switzerland}
\email{\{firstname.lastname\}@epfl.ch}
}
\begin{document}

\maketitle

\begin{abstract}
Based on a variation of the BIP operational semantics using the offer
predicate introduced in our previous work, we extend the algebras used to
model glue operators in BIP to encompass priorities. This extension uses
the Algebra of Causal Interaction Trees, $\ct(P)$, as a pivot: existing
transformations automatically provide the extensions for the Algebra of
Connectors.  We then extend the axiomatisation of $\ct(P)$, since the
equivalence induced by the new operational semantics is weaker than that
induced by the interaction semantics.  This extension leads to canonical
normal forms for all structures and to a simplification of the algorithm
for the synthesis of connectors from Boolean coordination constraints.
\end{abstract}


\section{Introduction}
\label{sec:intro}

Component-based design is based on the separation between coordination and
computation.  Systems are built from units processing sequential code
insulated from concurrent execution issues.  The isolation of coordination
mechanisms allows a global treatment and analysis.

Fundamentally, each component-based design framework consists of a
behaviour type \cB{} \cite{bliudze12-glue}, defining the underlying
semantic domain and the key properties such as the relevant equivalence
relations, and a set \glue{} of glue operators of the form $f: 2^\cB
\rightarrow \cB$.  As argued in \cite{framework05}, an important property
for glue operators is the possibility to be {\em flattened}: given a
behaviour $g(f(B_1,\dots,B_k), B_{k+1},\dots,B_n)$ obtained by hierarchical
composition with two glue operators, there must be an equivalent\footnote{
  The notion of equivalence, in this context, is given by
  the behaviour type \cB{} \cite{bliudze12-glue}.
} behaviour $h(B_1,\dots,B_n)$ obtained by applying a single glue
operator to the {\em same} atomic components.  In other words, \glue{} must
be closed under composition.  Flattening enables model transformations, \eg
for optimising code generation or component placement on multicore
platforms \cite{quilbeuf10-distr,jaber09-s2s}. 

BIP is a component framework for constructing systems by superposing three
layers: Behaviour, Interaction and Priorities.  In the classical BIP
semantics \cite{BliSif07-acp-emsoft}, behaviour is modelled by
\emph{Labelled Transition Systems} (LTS), \ie triples $B=(Q,P,\goesto)$,
where $Q$ is a set of {\em states}, $P$ is a set of {\em ports}, and
$\goesto\, \subseteq Q\times 2^P\times Q$ is a set of {\em transitions},
each labelled by an interaction (a subset of ports).  Glue operators are
defined using interaction and priority models.

For a set of behaviours $\setdef{B_i=(Q_i, P_i, \goesto)}{i\in [1,n]}$, an
\emph{interaction model} is a set of \emph{interactions} $\gamma \subseteq 2^P$,
where $P = \bigcup_{i=1}^n P_i$ (all $P_i$ are assumed to be pairwise disjoint).
The behaviour $\gamma(B_1,\dots,B_n)$ is defined by the behaviour $(Q, P,
\goesto_\gamma)$, with $Q = \prod_{i=1}^n Q_i$ and the minimal transition relation
$\goesto_\gamma$ satisfying the rule (we use set notation to group premises of the same type)
\begin{equation}
  \label{eq:transsem}
  \derrule[3]{a \in \gamma &
    \Setdef{q_i \longgoesto[a \cap P_i] q'_i}
           {a \cap P_i \not= \emptyset,\ i\in [1,n]} &
    \Setdef{q_i = q'_i}
           {a \cap P_i = \emptyset,\ i\in [1,n]}
  }{
      q_1\dots q_n \longgoesto[a]_\gamma q'_1\dots q'_n
  }\,.
\end{equation}

For a behaviour $B = (Q,P,\goesto)$, a {\em priority model} is a strict
partial order $\less$ on $2^P$.  When $a \less a'$, we say that the interaction $a'$ has \emph{higher priority} than $a$.  We put $B_\less \bydef{=}
(Q, P, \goesto_\less)$, with the minimal transition relation $\goesto_\less$
satisfying the rule
\begin{equation}
  \label{eq:prisem}
  \derrule[2]{
    q \longgoesto[a] q' &
     \Setdef{q \not\longgoesto[a']}{a \less a'}
  }{
    q \longgoesto[a]_\less q'
  }\,.
\end{equation}

Each $n$-ary glue operator in BIP is obtained as the composition of an interaction model
(an $n$-ary operator), composing several behaviours into a single one, and
a unary priority model.\footnote{
  Notice that both interaction and priority models can be trivial: a trivial interaction model over the set of ports $P$ is the set of \emph{singleton interactions} $\{\{p\}\,|\, p \in P\}$; a trivial priority model is empty with none of the interactions having higher priority than any other.
}  In general, when combined hierarchically such glue
operators cannot be flattened. Indeed, consider the following example.

\begin{figure}
  \centering
  \begin{subfigure}[b]{0.40\textwidth}
    \centering
    \input{nonflat-atoms.pspdftex}
    \caption{Atomic components $B_1$, $B_2$ and $B_3$}
    \label{fig:nonflat:atoms}
  \end{subfigure}%
  \hfill 
  \begin{subfigure}[b]{0.24\textwidth}
    \centering
    \input{nonflat-composed.pspdftex}
    \caption{Composed system}
    \label{fig:nonflat:composed}
  \end{subfigure}
  \hfill 
  \begin{subfigure}[b]{0.24\textwidth}
    \centering
    \input{nonflat-flat.pspdftex}
    \caption{Flat system}
    \label{fig:nonflat:flat}
  \end{subfigure}
  \mycaption{BIP component that cannot be flattened (\ex{nonflat})}
  \label{fig:nonflat}
\end{figure}

\begin{example}
  \label{ex:nonflat}
  Let $B_1$, $B_2$ and $B_3$ be the three atomic behaviours shown in
  \fig{nonflat:atoms} and consider the composed behaviour
  $g(f(B_1,B_2),B_3)$ (\fig{nonflat:composed}), with the glue operator $f$
  defined by the interaction model $\{p,q,r,s\}$ (omitted in
  \fig{nonflat:composed}) and priority model $\{p \less r\}$; $g$ defined
  by the interaction model $\{p,q,s,rt\}$ without any additional priority
  model.  One can prove that it is not possible to represent
  this behaviour as a flat one (\fig{nonflat:flat}).  Indeed, it is not
  sufficient to replace the priority $p \less r$ by $p \less rt$: in the
  global state $(1,3,6)$ of the composed behaviour in
  \fig{nonflat:composed}, interaction $p$ is inhibited by the priority $p
  \less r$; in the same state of the composed behaviour in
  \fig{nonflat:flat}, $p$ would not be inhibited by $p \less rt$, since
  interaction $rt$ is not enabled.

  Furthermore, although this goes beyond the scope of this paper, one can
  prove that there is no flat glue operator $h$ in the classical BIP semantics
  given by \eq{transsem} and \eq{prisem}, such that $g(f(B_1,B_2),B_3)$ be equivalent to
  $h(B_1,B_2,B_3,B_4)$ with any additional helper behaviour $B_4$.
\end{example}

The impossibility of flattening in the above example is due to the fact that the information used by
the priority model refers only to interactions authorised by the underlying
interaction model.  All the information about interactions enabled in
atomic components is lost after the application of $f$.  For instance, one can consider that, in
\ex{nonflat}, transitions $p$ and $r$ model respectively taking and
liberating a semaphore.  Thus $p$ should be disabled whenever $r$ is
possible, independently of whether $r$ can actually be taken on not (\eg
when $r$ is blocked waiting for a synchronisation, as in
\fig{nonflat:composed}).

In \cite{BliSif11-constraints-sc}, a variation of the BIP operational
semantics was introduced. In this variation, a behaviour is defined as an
LTS with an additional {\em offer} predicate, \ie a quadruple
$B=(Q,P,\goesto[],\offer[])$, such that $\offer \subseteq Q \times P$ and,
for any $q\in Q$ and $p \in P$, holds the implication $(\exists a \subseteq
P: p\in a \land q\goesto[a]) \implies q\offer[p]$.  The converse
implication is not required. In particular, when a transition labelled $p$
in a sub-component of a composed behaviour is blocked waiting for a
synchronisation, $p$ is still considered as offered.\footnote{
  As a side effect, the offer predicate can be used to distinguish between
  atomic and composite behaviours: a behaviour is atomic iff $(\exists a
  \subseteq P: p\in a \land q\goesto[a]) \Longleftrightarrow q\offer[p]$
  \cite{BliSif11-constraints-sc}.
} In \cite{BliSif11-constraints-sc}, we have established the equivalence between, on one hand, 
glue operators defined by sets of SOS rules having positive premises in terms of the transition relations $\goesto[]$ and offering predicates $\offer[]$ and negative premises in terms of the offering predicates only, and, on the other hand, Boolean formula with the so-called {\em firing} and
{\em activation} variables. We have also studied the expressiveness of such glue operators and compared it with classical BIP.

Using the offer predicate instead of the transition relation in the
negative premises of \eq{prisem}, ensures that the resulting set of glue operators is closed under composition.

In this paper, we show how the algebras representing interaction models
can be naturally generalised to also define priorities, based on the use of activation and firing variables.

Several algebraic structures are used to define and manipulate interaction
models in BIP \cite{BliSif07-acp-emsoft, BliSif10-causal-fmsd}.
  
{\bf The Algebra of Interactions}, $\ai(P)$, is isomorphic to $2^{2^P}$.
It provides a simple algebraic representation of interaction models
simplifying the definition of the semantics of other algebras.
  
{\bf The Algebra of Connectors}, $\ac(P)$, defines the connectors in the
form used in the BIP language, well adapted for graphical representation
and for the specification of data transfers.
  
{\bf The Algebra of Causal Interaction Trees}, $\ct(P)$, defines an
alternative semantic domain for connectors with the explicit causality
relation between ports.  Coherence results for the
$\ai(P)$ and $\ct(P)$ semantics of connectors have been provided in
\cite{BliSif10-causal-fmsd}.
  
{\bf Systems of Causal Rules}, $\cru(P)$, derived from causal interaction
trees define a Boolean representation of connectors, suitable for symbolic
manipulation and for specification of state safety properties.

In \cite{BliSif10-causal-fmsd}, the four transformations were provided
between $\ac(P)$ and $\ct(P)$, and between $\ct(P)$ and $\cru(P)$.  In
particular, this allows to synthesise connectors from $\sB[P]$ Boolean
formul\ae.

In this paper, we study the extension of the above algebras to represent both interaction and priority models.  Equivalence induced by the new operational semantics is weaker
than that induced by the interaction semantics.  We extend accordingly the
axioms of $\ct(\cdot)$ and provide corresponding normal forms for terms of
the considered algebras.  Finally, we show that, in this context, the
connector synthesis algorithm in \cite{BliSif10-causal-fmsd} can be
simplified by considering only the causal rules with firing variables in
the effect.

The rest of the paper is structured as follows.  We start, in \secn{related}, by a short discussion of some related work.  In \secn{representations}, we briefly recall the syntax and semantics of
all the considered algebras.  \secn{semantic} presents the new semantic
model for BIP based on the offer predicate.  Main contributions of the paper, namely the
extensions of the algebras encompassing the activation and negative ports
are presented in \secn{extension}.  We illustrate the extended algebras with
a connector synthesis example presented in \secn{example}.  Finally, \secn{conclusion} concludes the
paper.


\section{Related work} 
\label{sec:related}

The results in this paper build on our previous work cited above.  However,
the following related work should also be mentioned.  The approach we use
for the Boolean encoding of glue constraints is close to that used for
computing flows in Reo connectors in \cite{DeconstructingReo}, where it is
further extended to data flows.  

Several methodologies for synthesis of component coordination have been
proposed in the literature, \eg connector synthesis in
\cite{Arbab05ReoSynth, arbab08-synth, inverardi01}.  Both approaches are
very different from ours.  In \cite{Arbab05ReoSynth}, Reo circuits are
generated from constraint automata.  This approach is limited, in the first
place, by the complexity of building the automaton specification of
interactions.  An attempt to overcome this limitation is made in
\cite{arbab08-synth} by generating constraint automata from UML sequence
diagrams.  In \cite{inverardi01}, connectors are synthesised in order to
ensure deadlock freedom of systems that follow a very specific
architectural style imposing both the interconnection topology and
communication primitives (notification and request messages).

Recently a comparative study \cite{bruni12-tiles-wire-BIP} of three
connector frameworks\mdash tile model \cite{montanari06}, wire calculus
\cite{sobocinski09-wire} and BIP\mdash has been performed.  From the
operational semantics perspective, this comparison only accounts for
operators with positive premises.  In particular, priority in BIP is not
considered.  It would be interesting to see whether using ``local'' offer
predicate instead of ``global'' priorities of the classical BIP could help
generalising this work.


\section{Representations of the interaction model} 
\label{sec:representations}

In this section, we briefly recall the syntax and semantics of the algebras
used to represent BIP interaction models.  The semantics of the Algebra of
Interactions is given in terms of sets of interactions by a function
$\intsem{\cdot}: \ai(P) \rightarrow 2^{2^P}$.  Two terms $x,y \in \ai(P)$
are {\em equivalent} $x \simeq y$ iff $\intsem{x} = \intsem{y}$.  For any
other algebra, $\cA(P)$, among those mentioned in the introduction, we
define its semantics by the function $\aisem{\cdot}: \cA(P) \rightarrow
\ai(P)$.  A function $\intsem{\cdot}: \cA(P) \rightarrow 2^{2^P}$ is
obtained by composing $\aisem{\cdot}: \cA(P) \rightarrow \ai(P)$ and
$\intsem{\cdot}: \ai(P) \rightarrow 2^{2^P}$.  The axiomatisation of
$\ai(P)$ given in \cite{BliSif07-acp-emsoft} is sound and complete with
respect to $\simeq$.  Hence, for other algebras, the equivalences induced
by $\intsem{\cdot}$ and $\aisem{\cdot}$ coincide.

Below, we assume that a set of ports $P$ is given, such that $0,1\not\in
P$.


\subsection{Algebra of Interactions}
\label{sec:ai}

\begin{syntax}
The syntax of the {\em Algebra of Interactions}, $\ai(P)$, is defined by
the following grammar
\begin{equation} \label{eq:apsynt}
  \begin{array}{rc*{5}{l@{\ |\ }}l}
    x & ::= & 0 & 1 & p \in P & x\cdot x & x + x & (x)\,,\\
  \end{array}
\end{equation}
where `$+$' and `$\cdot$' are binary operators, respectively called {\em
  union} and {\em synchronisation}.  Synchronisation binds stronger than
union.
\end{syntax}

\begin{semantics}
The semantics of $\ai(P)$ is given by the function $\|\cdot\| : \ai(P)
\rightarrow 2^{2^P}$, defined by
\begin{equation} \label{eq:apsem}
  \renewcommand{\arraystretch}{1.5}
  \begin{array}{lcl}
    \lefteqn{\|0\|\ =\ \emptyset,\quad \|1\|\ =\ \{\emptyset\},\quad
      \|p\|\ =\ \Big\{\{p\}\Big\},}\\
    \|x_1 + x_2\| &=& \|x_1\| \cup \|x_2\|,\\
    \|x_1 \cdot x_2\| &=& 
    \Big\{a_1 \cup a_2\,\Big|\, a_1\in \|x_1\|, a_2\in \|x_2\| \Big\},\\
    \|(x)\| & = & \|x\|,
  \end{array}
\end{equation}
for $p\in P$, $x,x_1,x_2\in\ai(P)$.  Terms of $\ai(P)$ represent sets of
interactions between the ports $P$.
\end{semantics}

Sound and complete axiomatisation of $\ai(P)$ with respect to the semantic
equivalence is provided in \cite{BliSif07-acp-emsoft}.  In a nutshell,
$(\ai(P), +, \cdot, 0, 1)$ is a commutative semi-ring idempotent in both
$+$ and $\cdot$.


\subsection{Algebra of Connectors}
\label{sec:ac}

\begin{syntax}
The syntax of the {\em Algebra of Connectors}, $\ac(P)$, is defined by the
following grammar
\begin{equation} \label{eq:acsynt}
  \renewcommand{\arraystretch}{1.5}
  \begin{array}{lcll}
    s & ::= & [0]\ |\ [1]\ |\ [p]\ |\ [x] \:\:\:\:\:\:\:\:\:(synchrons)\\
    t & ::= & [0]'\ |\ [1]'\ |\ [p]'\ |\ [x]'  \:\:\:\:\:\:   (triggers)\\
    x & ::= & s\ |\ t\ |\ x\cdot x\ |\ x + x\ |\ (x)\,, 
  \end{array}
\end{equation}
for $p\in P$, and where `$+$' is binary operator called {\em union},
`$\cdot$' is a binary operator called {\em fusion}, and brackets
`$[\cdot]$' and `$[\cdot]'$' are unary {\em typing} operators.  Fusion
binds stronger than union.

Fusion is a generalisation of the synchronisation in $\ai(P)$.  Typing is
used to form typed connectors: `$[\cdot]'$' defines {\em triggers} (can
initiate an interaction), and `$[\cdot]$' defines {\em synchrons} (need
synchronisation with other ports in order to interact).
\end{syntax}

\begin{semantics}
The semantics of $\ac(P)$ is given by the function $|\cdot| : \ac(P)
\rightarrow \ai(P)$:
\begin{align}
  \label{eq:acap}
  |[p]| & = p\,,
  & |x_1 + x_2| & = |x_1| + |x_2|\,,
  & \Big|\prod_{i=1}^n [x_i]\,\Big| & = \prod_{i=1}^n |x_k|\,,
  \\
  \label{eq:acaptrig} 
  && \hspace{-2cm}\Big|\prod_{i=1}^n [x_i]' \prod_{j=1}^m [y_j]\,\Big| = 
  \sum_{i=1}^n |x_i| & \lefteqn{\left(
  \prod_{k\not=i}\Big(1 + |x_k|\Big)\ \prod_{j=1}^m \Big(1 + |y_j|\Big)\right)}\,,
\end{align}
for $x,x_1,\dots,x_n,y_1,\dots,y_m \in \ac(P)$ and $p\in P\cup \{0,1\}$.
\end{semantics}

Sound and complete axiomatisation of $\ac(P)$ with respect to the semantic
equivalence is provided in \cite{BliSif08-acp-tc}.  We omit it here, since
we will not need it in the rest of this paper.

\fig{connectors} shows four basic examples of the graphical representation
of connectors.  Triggers are denoted by triangles, whereas synchrons are
denoted by bullets.  The interaction semantics of the four connectors is
given in the subfigure captions.

\begin{figure}
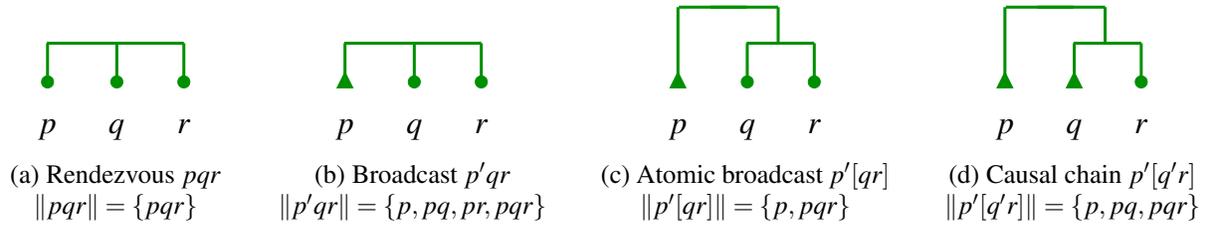

  \centering
  \begin{subfigure}[b]{0.18\textwidth}
    \centering
    \input{rdv.pspdftex}
    \caption{\centering Rendezvous~$pqr$
      $\intsem{pqr} = \{pqr\}$}
    \label{fig:connectors:rdv}
  \end{subfigure}%
  \hfill 
  \begin{subfigure}[b]{0.24\textwidth}
    \centering
    \input{broadcast.pspdftex}
    \caption{\centering Broadcast~$p'qr$
      $\intsem{p'qr} = \{p, pq, pr, pqr\}$}
    \label{fig:connectors:bdc}
  \end{subfigure}%
  \hfill 
  \begin{subfigure}[b]{0.24\textwidth}
    \centering
    \input{atomic-bdc.pspdftex}
    \caption{\centering Atomic broadcast~$p'[qr]$
      $\intsem{p'[qr]} = \{p, pqr\}$}
    \label{fig:connectors:atomic}
  \end{subfigure}%
  \hfill 
  \begin{subfigure}[b]{0.23\textwidth}
    \centering
    \input{causal-chain.pspdftex}
    \caption{\centering Causal chain~$p'[q'r]$
      $\intsem{p'[q'r]} = \{p, pq, pqr\}$}
    \label{fig:connectors:causal}
  \end{subfigure}%
  \caption{Basic connector examples}
  \label{fig:connectors}
\end{figure}


\subsection{Algebra of Causal Interaction Trees}
\label{sec:trees}

\begin{syntax}
The syntax of the \emph{Algebra of Causal Interaction Trees}, $\ct(P)$, is
given by
\begin{equation} \label{eq:ctsyn}
  t ::= a \,|\, a \rightarrow t \,|\, t\oplus t\,,
\end{equation}
where $a \in 2^P \cup \{0,1\}$ is an interaction, and `$\rightarrow$' and
`$\oplus$' are respectively the {\em causality} and the {\em parallel
  composition} operators.  Causality binds stronger than parallel
composition.  Notice that a causal interaction tree can have several roots.

The causality operator is right- (but not left-) associative, thus for interactions
$a_1,\dots,a_n$, we can abbreviate $a_1 \rightarrow (a_2 \rightarrow (\dots
\rightarrow a_n)\dots))$ to $a_1 \rightarrow a_2 \rightarrow \dots
\rightarrow a_n$.  We call this construction a {\em causal chain}.
\end{syntax}

\begin{semantics}
The semantics of $\ct(P)$ is given by the function $|\cdot|: \ct(P)
\rightarrow \ai(P)$
\begin{align}
  \label{eq:ctsem}
  |a| & = a\,,
  & |a \rightarrow t| & = a\Big(1 + |t|\Big)\,,
  & |t_1 \oplus t_2| & = |t_1| + |t_2| + |t_1|\,|t_2|\,,
\end{align}
where $a$ is an interaction and $t, t_1, t_2 \in \ct(P)$.
\end{semantics}

A sound (although not complete) axiomatisation of $\ct(P)$ is provided in
\cite{BliSif10-causal-fmsd}.  Rather than reproduce it here, we indicate
the differences after the extension provided in \secn{refinement}.


\subsection{Systems of Causal Rules}
\label{sec:rules}

Below, for any set $X$ of propositional variables, we denote by $\sB[X]$ the corresponding Boolean algebra generated by $X$.  For presentation clarity, we will often omit the conjunction operator and write $a \lor bc$ instead of $a \lor (b \land c)$.

\begin{definition}
  A {\em causal rule} is a $\sB[P]$ formula $E \Rightarrow C$, where $E$
  (the \emph{effect}) is either a constant, $\true$, or a port variable $p
  \in P$, and $C$ (the \emph{cause}) is either a constant, $\true$ or
  $\false$, or a positive $\sB[P\setminus\{p\}]$ formula in disjunctive normal form.
\end{definition}

\begin{note} \label{rem:absorption}
  Notice that $a_1 \lor a_1\,a_2 = a_1$, and therefore causal rules can be
  simplified by replacing $p \Rightarrow a_1 \lor a_1\,a_2$ with $p
  \Rightarrow a_1)$.  We assume that all the causal rules are simplified by
  this absorption rule.
\end{note}

\begin{definition}
  A \emph{system of causal rules} is a set $R = \{p \Rightarrow x_p\}_{p\in
    P^t}$, where $P^t \bydef{=} P \cup \{\true\}$, having precisely one causal rule for each port variable $p \in P^t$.  An interaction $a \in
  2^P$ satisfies the system $R$ (denoted $a \models R$), iff the
  characteristic valuation of $a$ on $P$ satisfies the formula
  $\bigwedge_{p\in P^t} (p \Rightarrow x_p)$.  We denote by
  $|R|\ \bydef{=}\ \sum_{a \models R} a$ the union (in terms of the Algebra
  of Interactions) of the interactions satisfying $R$.  Thus we have
  $|\cdot| : \cru(P) \rightarrow \ai(P)$, where $\cru(P)$ is the set of all
  systems of causal rules over the set of port variables $P$.
\end{definition}



\subsection{Transformations between different representations}
\label{sec:transformations}

Transformations $\ac(P) \overset{\tau}{\underset{\sigma}{\rightleftarrows}}
\ct(P)$ and $\ct(P) \overset{R}{\rightleftarrows} \cru(P)$ were defined in
\cite{BliSif10-causal-fmsd} and shown to respect $\simeq$.  Below, we will
only need the transformations $\sigma : \ct(P) \rightarrow \ac(P)$ and $R:
\ct(P) \rightarrow \cru(P)$.  The former is defined recursively by putting
\begin{align}
  \label{eq:treecon}
  \sigma(a) & = [a]\,,
  & \sigma(a \rightarrow t) & = [a]'\,[\sigma(t)]\,,
  & \sigma(t_1 \oplus t_2)  & = [\sigma(t_1)]'\,[\sigma(t_2)]'\,.
\end{align}
We define $R: \ct(P) \rightarrow \cru(P)$ by putting
\begin{equation} 
  \label{eq:trees2rules}
  R(t)\ =\ \{p \Rightarrow c_p(t)\}_{p\in P\cup\{\true\}}\,,
\end{equation}
where the function $c_p : \ct(P) \rightarrow \sB[P]$ is defined recursively
as follows.  For $a\in 2^P$ (with $p\not\in a$) and $t,t_1,t_2 \in \ct(P)$,
we put
\begin{align*}
  c_p(0) & = \false\,, 
  & c_{\true}(0) & = \false\,,\\
  c_p(p \rightarrow t) & = \true\,,
  & c_{\true}(1 \rightarrow t) & = \true\,,\\
  c_p(pa \rightarrow t) & = a\,,
  & c_{\true}(a \rightarrow t) & = a\,,\\
  c_p(a \rightarrow t) & = a \land c_p(t)\,,\\
  c_p(t_1 \oplus t_2) & = c_p(t_1) \lor c_p(t_2)\,, 
  & c_{\true}(t_1 \oplus t_2) & = c_{\true}(t_1) \lor c_{\true}(t_2)\,.
\end{align*}

Observe that this transformation associates to each port $p \in P$ a causal
rule $p \Rightarrow C$, where $C$ is the disjunction of all prefixes leading
from roots of $t$ to some node containing $p$, including the ports of this
node other than $p$.


\section{Modification of the semantic model}
\label{sec:semantic}

We now present the variation of the BIP operational semantics based on the
offer predicate \cite{BliSif11-constraints-sc}.

\begin{definition}
  \label{defn:lts}
  A \emph{labelled transition system} (LTS) is a triple $(Q,P,\goesto)$,
  where $Q$ is a set of \emph{states}, $P$ is a set of \emph{ports}, and
  $\goesto\, \subseteq Q\times 2^P \times Q$ is a set of
  \emph{transitions}, each labelled by a non-empty set of ports.  For $q,q'
  \in Q$ and $a \in 2^P$, we write $q \goesto[a] q'$ iff $(q,a,q') \in\,
  \goesto$.  A label $a \in 2^P$ is \emph{active} in a state $q \in Q$
  (denoted $q \goesto[a]$), iff there exists $q' \in Q$ such that $q
  \goesto[a] q'$.  We abbreviate $q \not\goesto[a] \bydef{=} \lnot (q
  \goesto[a])$.  
\end{definition}

Below, it is assumed that, for all $q \in Q$, $q \goesto[\emptyset]
q$.  All results of the paper can be reformulated without this
assumption, but making it simplifies the presentation.  We write $pq$
for the set of ports $\{ p, q\}$.

\begin{definition}
  \label{defn:behaviour}
  A \emph{behaviour} is a pair $B=(S,\offer)$ consisting of an LTS
  $S=(Q,P,\goesto)$ and an \emph{offer} predicate $\offer$ on $Q \times P$
  such that $q \offer[p]$ holds (a port $p \in P$ is \emph{offered} in a
  state $q \in Q$) whenever there is a transition from $q$ containing $p$,
  that is $(\exists a \in 2^P: p \in a \land q \goesto[a]) \Rightarrow q
  \offer[p]$.  We write $B = (Q,P,\goesto,\offer)$ for $B =
  ((Q,P,\goesto),\offer)$.

  The offer predicate extends to sets of ports: for $a \in 2^P$, $q
  \offer[a] \bydef{=} \bigwedge_{p \in a} q \offer[p]$.  Notice that
  $q\offer[\emptyset] \equiv \true$.
\end{definition}

\begin{note}
  In the following, we assume, for any $B_i = (Q_i, P_i, \goesto, \offer)$
  with $i \in [1,n]$, that $\{P_i\}_{i=1}^n$ are pairwise disjoint (\ie $i
  \neq j$ implies $P_i \cap P_j = \emptyset$) and $P \bydef{=}
  \bigcup_{i=1}^n P_i$.

  To avoid excessive notation, here and in the rest of the paper, we drop
  the indices on $\goesto$ and $\offer$, as they can always be
  unambiguously deduced from the corresponding state variables.
\end{note}

Let $P$ be a set of ports.  We denote $\fire{P} \bydef{=}
\setdef{\fire{p}}{p \in P}$ and $\inhib{P} \bydef{=}
\setdef{\inhib{p}}{p \in P}$.  We call the elements of $\act{P}$,
$\fire{P}$ and $\inhib{P}$ respectively {\em activation}, {\em firing}
and {\em negative port typings}.

\begin{definition}
  \label{defn:interaction}
  An {\em interaction} is a subset $a \subseteq \act{P} \cup \fire{P} \cup
  \inhib{P}$.

  For a given interaction $a$, we define the following sets of ports:
  \begin{itemize}
    \item $\actsup{a} \bydef{=} a \cap P$, the {\em activation support} of
      $a$,
    \item $ \firesup{a} \bydef{=} \setdef{p \in P}{\fire{p} \in a}$, the
      {\em firing support} of $a$,
    \item $\negsup{a} \bydef{=} \setdef{p \in P}{\inhib{p} \in a}$, the
      {\em negative support} of $a$.
  \end{itemize}
\end{definition}

\begin{definition}
  \label{defn:composition}
  Let $B_i = (Q_i, P_i, \goesto, \offer)$, with $i \in [1,n]$ and $P =
  \bigcup_{i=1}^n P_i$, be a set of component behaviours.  Let $\gamma
  \subseteq 2^{\act{P} \cup \fire{P} \cup \inhib{P}}$ be a set of
  interactions.  The composition of $\{B_i\}_{i=1}^n$ with $\gamma$ is
  a behaviour $\gamma(B_1,\dots,B_n) \bydef{=} (Q, P, \goesto,
  \offer)$ with
  \begin{itemize}
  \item the set of states $Q = \prod_{i=1}^n Q_i$\mdash the cartesian
    product of the sets of states $Q_i$,
  \item the strongest (\ie inductively defined) offer predicate $\offer$
    satisfying the rules, for each $i \in [1,n]$,
    \begin{equation}
      \label{eq:rule:offer}
      \derrule{q_i \offer p}{q_1\dots q_n \offer p}
    \end{equation}
    (recall that the sets of ports $P_i$ are pairwise disjoint),
  \item the minimal transition relation $\goesto$ satisfying the
    rule
    \begin{equation}
    \label{eq:rule:trans}
     \renewcommand{\arraystretch}{2}
     \derrule[4]{
       a \in \gamma
       & \Big\{q_i\longgoesto[\firesup{a}\cap P_i] q_i'\Big\}_{i=1}^n
       & \Big\{q_i \offer (\actsup{a} \cap P_i)\Big\}_{i=1}^n
       & \Setdef{q_i \noffer p}{p \in \negsup{a} \cap P_i}_{i=1}^n
     }
     {q_1\dots q_n \longgoesto[\firesup{a}] q_1'\dots q_n'}\,.
   \end{equation}
  \end{itemize}
\end{definition}

\begin{wrapfigure}{r}{0.24\textwidth}
  \centering
  \input{nonflat-new.pspdftex}
  \mycaption{Flat composed system equivalent to the one shown in
    \fig{nonflat:composed}}
  \label{fig:nonflat:new}
\end{wrapfigure}

Taking on the \ex{nonflat} from the introduction, a flat composition of
$B_1$, $B_2$ and $B_3$ equivalent to that of \fig{nonflat:composed} in the
semantics of \defn{composition} is shown in \fig{nonflat:new} on the right.
This representation follows the classical BIP approach with the exception
of the priority, whereof the semantics is defined in terms of the offer
predicate.  In terms of \defn{composition}, this is translated by taking
$\gamma = \{\fire{p}\,\inhib{r}, \fire{q}, \fire{s}, \fire{r}\,\fire{t}\}
\subseteq 2^{\act{P} \cup \fire{P} \cup \inhib{P}}$.

BIP composition operators, consisting of an interaction and a priority
model, can be given new operational semantics in terms of the offer
predicate: the semantics of the interaction model composition remains the
same \eq{transsem}, whereas the rule for priority becomes
\begin{equation}
  \label{eq:newprisem}
  \derrule[2]{
    q \longgoesto[a] q' &
     \Setdef{q \noffer[a']}{a \less a'}
  }{
    q \longgoesto[a]_\less q'
  }\,.
\end{equation}
Clearly, any combination of BIP interaction and priority models can be
represented by an extended interaction model $\gamma \subseteq 2^{\act{P}
  \cup \fire{P} \cup \inhib{P}}$.  A priority $a \less p_1\dots p_n$ is
translated into $\{\fire{a}\,\inhib{p_1}, \dots, \fire{a}\,\inhib{p_n}\}$
(here $\fire{a}$ is a shorthand for the set of firing ports corresponding
to ports in $a$).  In general, when several inhibitors are defined for the
same interaction, that is $a \less p_1^i\dots p_{n_i}^i$, for $i \in
[1,m]$, this translates into
$\setdef{\fire{a}\,\inhib{p^1_{k_1}}\,\dots\,\inhib{p^m_{k_m}}}{k_i \in
  [1,n_i]}$.

It is important to observe that, as stated by \lem{nofiring} below, the
rule \eq{rule:trans} in \defn{composition} implies that any interaction $a
\in \gamma$ such that $\firesup{a} = \emptyset$ does not have any impact on
the composed system.

\begin{lemma}
  \label{lem:nofiring}
  Let $\gamma_1, \gamma_2 \subseteq 2^{\act{P} \cup \fire{P} \cup
    \inhib{P}}$ be two sets of interactions and denote
  $\symmdiff{\gamma_1}{\gamma_2} \bydef{=} (\gamma_1 \setminus \gamma_2)
  \cup (\gamma_2 \setminus \gamma_1)$ their symmetric difference.  If
  $\firesup{a} = \emptyset$, for all $a \in \symmdiff{\gamma_1}{\gamma_2}$,
  then $\gamma_1(B_1,\dots,B_n) = \gamma_2(B_1,\dots,B_n)$.
\end{lemma}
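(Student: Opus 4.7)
The plan is to decompose the equality $\gamma_1(B_1,\dots,B_n) = \gamma_2(B_1,\dots,B_n)$ into its ingredients and observe that only the transition relation can possibly differ. The set of states $Q = \prod_{i=1}^n Q_i$ is independent of $\gamma$, and the offer predicate is defined by rule \eq{rule:offer}, whose premise $q_i \offer p$ refers only to the components $B_i$; so the two composed behaviours have identical state spaces and identical offer predicates. This reduces the lemma to showing that the two transition relations coincide.

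For the transition relation I would proceed by symmetric bidirectional inclusion, focusing on one direction. Any transition in $\gamma_1(B_1,\dots,B_n)$ is either the reflexive $\emptyset$-transition mandated for every composed state by the standing convention after \defn{lts}, or it is derived from some $a \in \gamma_1$ by an application of rule \eq{rule:trans}. If $a \in \gamma_1 \cap \gamma_2$, exactly the same derivation applies inside $\gamma_2(B_1,\dots,B_n)$, since all three families of premises of \eq{rule:trans} refer only to the components $B_i$, which are fixed. If $a \in \gamma_1 \setminus \gamma_2 \subseteq \symmdiff{\gamma_1}{\gamma_2}$, then by hypothesis $\firesup{a} = \emptyset$, so the conclusion is a transition labelled $\emptyset$ with each component contributing a premise $q_i \goesto[\emptyset] q'_i$; reading the convention $q \goesto[\emptyset] q$ as characterising the only $\emptyset$-labelled transitions, this forces $q'_i = q_i$ and the derived transition is the self-loop on $q_1\dots q_n$, which is already present in $\gamma_2(B_1,\dots,B_n)$ by the same convention. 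By symmetry, the reverse inclusion follows identically.

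The main subtlety I anticipate is the status of $\emptyset$-labelled transitions. The paper only explicitly asserts that each state carries a reflexive $\emptyset$-transition, but the argument above implicitly uses the stronger reading that these are the only $\emptyset$-transitions. Without this strengthening, an interaction in $\symmdiff{\gamma_1}{\gamma_2}$ with empty firing support could, in principle, trigger a non-reflexive $\emptyset$-transition in one composition that is absent from the other, breaking the lemma. Under the natural convention that $\emptyset$ labels only self-loops, the remainder of the proof is a routine case split on the premises of \eq{rule:trans}, and no induction on the structure of $\gamma$ or of the derivation is needed.
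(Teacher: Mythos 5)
Your proof is correct and follows essentially the same route as the paper: only the transition relation can differ, interactions with empty firing support contribute only the reflexive $\emptyset$-labelled self-loop already present in every state, and all other interactions lie in $\gamma_1 \cap \gamma_2$. The subtlety you flag about $\emptyset$-labelled transitions is already settled by \defn{lts}, which requires every transition of an LTS to be labelled by a \emph{non-empty} set of ports, so the standing convention adds exactly the self-loops and nothing else.
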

\begin{proof}
  It is easy to see that $\gamma_1(B_1,\dots,B_n)$ and
  $\gamma_2(B_1,\dots,B_n)$ behaviours can differ only in their respective
  transition relations $\goesto$.

  Application of the rule \eq{rule:trans} in \defn{composition} to an
  interaction with empty firing support generates a transition $q_1\dots
  q_n \longgoesto[\firesup{a} = \emptyset] q_1\dots q_n$. As mentioned in
  the opening of this section, we assume that the self-loop transition
  labelled by an empty set is enabled in all states. Therefore, the above
  transition is present in both $\gamma_1(B_1,\dots,B_n)$ and
  $\gamma_2(B_1,\dots,B_n)$.  By the assumption of the lemma, all
  interactions with non-empty firing support belong to $\gamma_1 \cap
  \gamma_2$.  Hence all transitions labelled with non-empty interactions
  also appear in both $\gamma_1(B_1,\dots,B_n)$ and
  $\gamma_1(B_1,\dots,B_n)$.
\end{proof}

\begin{lemma}
  \label{lem:minimal}
  Let $\gamma_1 \subseteq 2^{\act{P} \cup \fire{P} \cup \inhib{P}}$ be a
  set of interactions, $\gamma_2 = \gamma_1 \cup \{a\}$, with $a \subseteq
  \act{P} \cup \fire{P} \cup \inhib{P}$, such that there is an interaction
  $b \in \gamma_1$, $b \subseteq a$ and $\firesup{b}=\firesup{a}$. Then
  $\gamma_1(B_1,\dots,B_n) = \gamma_2(B_1,\dots,B_n)$.
\end{lemma}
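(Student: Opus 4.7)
The plan is to show that $\gamma_1(B_1,\dots,B_n)$ and $\gamma_2(B_1,\dots,B_n)$ agree on all three ingredients fixed by \defn{composition}: state space, offer predicate, and transition relation. The state spaces are trivially equal (both are $\prod_{i=1}^n Q_i$). The offer predicate is given inductively by rule~\eq{rule:offer}, which depends only on the component offers and not on $\gamma$; hence both composed systems carry the same $\offer$.

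The only content is therefore in the transition relation. Since $\gamma_1 \subseteq \gamma_2$, every instance of rule~\eq{rule:trans} available in $\gamma_1(B_1,\dots,B_n)$ is also available in $\gamma_2(B_1,\dots,B_n)$, so one inclusion is immediate. For the converse, the only new premise to handle is an application of \eq{rule:trans} with the interaction $a$. I would fix such an application, yielding a transition
\[
  q_1\dots q_n \longgoesto[\firesup{a}] q'_1\dots q'_n,
\]
together with the firing, offer, and negative premises for $a$. The strategy is then to exhibit the same transition as coming from $b \in \gamma_1$. This is where the hypotheses $b \subseteq a$ and $\firesup{b} = \firesup{a}$ do all the work: the firing premises of \eq{rule:trans} for $b$ coincide with those for $a$ (so the same target states $q_i'$ are produced), while $\actsup{b} \subseteq \actsup{a}$ and $\negsup{b} \subseteq \negsup{a}$ mean the offer and negative premises for $b$ are logically weaker than, hence implied by, those already established for $a$. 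Rule~\eq{rule:trans} applied to $b$ therefore yields exactly the same transition, so it also belongs to the transition relation of $\gamma_1(B_1,\dots,B_n)$.

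There is no real obstacle here; the proof is essentially a premise-by-premise comparison, and the only thing to be careful about is that the label of the derived transition is $\firesup{a}$ (not $a$ itself), which is why the assumption $\firesup{b} = \firesup{a}$ is exactly what is needed to match labels. Combining the two inclusions gives equality of the transition relations, and hence $\gamma_1(B_1,\dots,B_n) = \gamma_2(B_1,\dots,B_n)$.
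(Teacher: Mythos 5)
Your proof is correct and follows the same route as the paper: the paper's argument is exactly the observation that any application of rule~\eq{rule:trans} with $a$ can be replaced by one with $b$, since $\firesup{b}=\firesup{a}$ matches the firing premises and the label, while $b \subseteq a$ makes the offer and negative premises for $b$ a subset of those for $a$. You simply spell out the details (equality of state spaces and of the offer predicate, and the trivial inclusion $\gamma_1 \subseteq \gamma_2$) that the paper leaves implicit.
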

\begin{proof}
  According to rule \eq{rule:trans} any transition generated by the
  interaction $a$ can also be generated by the interaction $b$. Thus,
  interaction $a$ does not impact the behaviour of the composed system, and
  $\gamma_1(B_1,\dots,B_n) = \gamma_2(B_1,\dots,B_n)$.
\end{proof}
  

\section{Algebra extensions} 
\label{sec:extension}

In \secn{semantic}, we have replaced the classical BIP combination of
interaction and priority models with an extended interaction model with
ports of three types: firing, activation and negative.\footnote{
  Only firing and negative ports are necessary to define classical BIP
  composition operators.  Activation ports allow for a full correspondence
  with $\sB[\act{P},\fire{P}]$ Boolean constraints.  This correspondence
  and an expressivity study are given in \cite{BliSif11-constraints-sc}.
}  We can now extend other algebras used for the glue representation.

We start by considering the extension of the Algebra of Interactions,
$\ai(P)$.  Recall that $x \simeq y$ iff $\intsem{x} = \intsem{y}$.  As a
simple corollary of the results in \cite{BliSif08-express-concur},
$\intsem{x} = \intsem{y}$ is equivalent to $\intsem{x}(\bB) =
\intsem{y}(\bB)$, for any finite family $\bB$ of behaviours.

Below we will consider $\ai(\act{P} \cup \fire{P} \cup \inhib{P})$ with the
latter definition of term equivalence: two terms $x,y \in \ai(\act{P} \cup
\fire{P} \cup \inhib{P})$ are equivalent iff $\intsem{x}(\bB) =
\intsem{y}(\bB)$ (in terms of \defn{composition}), for any finite family
$\bB$ of behaviours.  In general, we define equivalence as follows.

\begin{definition}
Let $\cA(P)$ be an algebra, $\intsem{\cdot} : \cA(P) \rightarrow 2^{2^P}$.
Two terms $x, y \in \cA(P)$ are {\em equivalent} $x \sim y $ iff, for any
finite family $\bB$ of behaviours, $\intsem{x}(\bB) = \intsem{y}(\bB)$ (in
terms of \defn{composition}).
\end{definition}

\begin{note}
  Clearly $\sim$ is weaker than $\simeq$.
\end{note}

We are now in position to similarly extend the other algebras.  The
interaction semantics of the causal interaction trees $\aisem{\cdot}:
\ct(P) \rightarrow \ai(P)$ is transposed without any change to
$\aisem{\cdot}: \ct(\act{P} \cup \fire{P} \cup \inhib{P}) \rightarrow
\ai(\act{P} \cup \fire{P} \cup \inhib{P})$.  Similarly, the functions
$\tau: \ac(P) \rightarrow \ct(P)$ and $\sigma: \ct(P) \rightarrow \ac(P)$
are transposed identically to $\ac(\act{P} \cup \fire{P} \cup \inhib{P})$
and $\ct(\act{P} \cup \fire{P} \cup \inhib{P})$.  The same goes for the
mapping $R(t)$ associating to a causal interaction tree $t \in \ct(P)$ the
corresponding system of causal rules \cite{BliSif10-causal-fmsd}.  The only
difference is that, in $\cru(\act{P} \cup \fire{P} \cup \inhib{P})$ we
introduce the following additional axiom: $\fire{p} \Rightarrow \act{p}$,
for all $p \in P$.

\begin{proposition} \label{prop:congruence}
  The equivalence relation $\sim$ on $\ct(\act{P} \cup \fire{P} \cup
  \inhib{P})$ is a congruence.
\end{proposition}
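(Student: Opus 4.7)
The relation $\sim$ is an equivalence by definition, so proving it is a congruence on $\ct(\act{P} \cup \fire{P} \cup \inhib{P})$ amounts to verifying compatibility with the operations $\rightarrow$ and $\oplus$. Using $|a \rightarrow t| = a(1+|t|)$ and $|t_1 \oplus t_2| = |t_1|+|t_2|+|t_1|\cdot|t_2|$, both operations translate, at the interaction-set level, into combinations of unions and products with fixed factors. The plan is therefore to show that $\sim$, viewed on $\ai(\act{P} \cup \fire{P} \cup \inhib{P})$, is preserved by (i)~union and (ii)~multiplication by a singleton $\{a\}$; the congruence on $\ct$ will follow by assembling these two ingredients through the semantics above.

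Preservation by union is immediate. Rule \eq{rule:trans} is applied independently to each interaction of the interaction model: $(\gamma_1 \cup \gamma_2)(\bB)$ contains the transition $q \longgoesto[s] q'$ precisely when either $\gamma_1(\bB)$ or $\gamma_2(\bB)$ does, while the state set, port set, and offer predicate are independent of the choice of interaction model. Hence $\gamma_i \sim \gamma_i'$ for $i = 1,2$ yields $\gamma_1 \cup \gamma_2 \sim \gamma_1' \cup \gamma_2'$, and the same argument extends to arbitrary unions.

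The main step I expect to require real care is (ii). Given a fixed interaction $a \subseteq \act{P} \cup \fire{P} \cup \inhib{P}$ and any family $\bB = (B_i)_{i=1}^n$ of behaviours, I would construct a modified family $\bB^a$ with the same states, ports, and offer predicate, but a restricted transition relation where $q_i \longgoesto[s_i]^{a} q_i'$ iff $q_i \longgoesto[s_i \cup (\firesup{a}\cap P_i)] q_i'$ in $\bB$, $q_i \offer (\actsup{a} \cap P_i)$, and $q_i \noffer p$ for every $p \in \negsup{a} \cap P_i$. One checks that $\bB^a$ is a valid behaviour family because the implication $q_i \goesto[s] \Rightarrow q_i \offer s$ survives this restriction. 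Then, by direct inspection of rule \eq{rule:trans} on both sides, a triple $(q, s, q')$ lies in the transition relation of $(\{a\}\cdot\gamma)(\bB)$ iff there exists $b \in \gamma$ with $\firesup{a} \cup \firesup{b} = s$ such that $(q, \firesup{b}, q')$ lies in that of $\gamma(\bB^a)$. Since $\gamma \sim \gamma'$ forces $\gamma(\bB^a) = \gamma'(\bB^a)$, this correspondence gives $(\{a\}\cdot\gamma)(\bB) = (\{a\}\cdot\gamma')(\bB)$, hence $\{a\}\cdot\gamma \sim \{a\}\cdot\gamma'$. The delicate points are the precise matching of firing supports under the $\firesup{a}$-shift and the verification that the activation and negative side-conditions distribute correctly between $a$ and $b$.

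Combining (i) and (ii), since $\gamma \cdot \delta = \bigcup_{b \in \delta} \{b\}\cdot\gamma$, we obtain $\gamma \cdot \delta \sim \gamma' \cdot \delta'$ whenever $\gamma \sim \gamma'$ and $\delta \sim \delta'$ (first vary $\gamma$ with $\delta$ fixed, then swap the roles of the factors). The congruence on $\ct$ now follows by expanding the semantics: $t \sim t'$ implies $\intsem{a \rightarrow t} = \{a\} \cup \{a\}\cdot\intsem{t} \sim \{a\} \cup \{a\}\cdot\intsem{t'} = \intsem{a \rightarrow t'}$, and $t_i \sim t_i'$ for $i=1,2$ implies $\intsem{t_1 \oplus t_2} = \intsem{t_1} \cup \intsem{t_2} \cup \intsem{t_1}\cdot\intsem{t_2} \sim \intsem{t_1' \oplus t_2'}$, which is what we had to show.
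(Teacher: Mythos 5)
Your proof is correct, and it actually supplies substance that the paper's own argument leaves implicit. The paper's proof is a two-line sketch: structural induction on the context $C(z)$, justified by ``the semantics of $\ct$ is compositional.'' But compositionality of $\intsem{\cdot}$ at the interaction-set level only yields congruence for $\simeq$; for the weaker behavioural equivalence $\sim$ one must additionally show that the set-level operations induced by `$\rightarrow$' and `$\oplus$' --- union and fusion with a fixed interaction --- preserve behavioural indistinguishability. That is exactly your steps (i) and (ii), and your construction of the modified family $\bB^a$ (absorbing $\firesup{a}$, $\actsup{a}$ and $\negsup{a}$ into the component transition relations so that $(\{a\}\cdot\gamma)(\bB)$ is determined, up to relabelling by $s \mapsto s\cup\firesup{a}$, by $\gamma(\bB^a)$) is the genuine content of the induction step. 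So the skeleton is the same as the paper's, but you have done the work the paper defers to its reference.

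One small wrinkle to tidy up: the paper's standing convention is that every state has an $\emptyset$-labelled self-loop, and your $\bB^a$ need not satisfy it (a state $q_i$ has $q_i \goesto[\emptyset]^a q_i$ only if $q_i \goesto[\firesup{a}\cap P_i] q_i$ and the offer/negation conditions of $a$ hold there). Since $\sim$ quantifies over legitimate behaviour families, you should either invoke the paper's remark that all results can be reformulated without that convention, or handle interactions $b$ with $\firesup{b}\cap P_i=\emptyset$ separately; adding the self-loops back into $\bB^a$ naively would break the correspondence for such $b$. This is a repairable technicality, not a flaw in the approach.
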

\begin{proof}[Sketch of the proof]
  The proof is the same as for $\ct(P)$ \cite{BliSif10-causal-fmsd}.  For
  any two trees $t_1, t_2 \in \ct(\act{P} \cup \fire{P} \cup \inhib{P})$
  and for any context $C(z) \in \ct(\act{P} \cup \fire{P} \cup \inhib{P}
  \cup \{z\})$, we have to show that the equivalence $t_1 \sim t_2$ implies
  $C(t_1/z) \sim C(t_2/z)$, where $C(t_i/z)$ is the tree obtained, by
  replacing in $C(z)$ all occurrences of $z$ by $t_i$.  Since the semantics
  $\ct$ is compositional, structural induction on the context $C(z)$ proves
  the proposition.
\end{proof}

The first consequence of the above extension is that, rather than extending the
existing graphical representation of connectors, it can be directly used in its present form to
express priorities and activation conditions (the use of the offer
predicate in the positive premises of the rule \eq{rule:trans}) by adding a
trivalued attribute to ports: {\em firing}, {\em activation} and {\em
  negative}.  It is important to observe the difference between, on one
hand, adding an attribute to ports and, on the other hand, modifying the
typing operator ({\em synchron} vs.\ {\em trigger} typing), since the
latter is applied at each level of the connector hierarchy, whereas the
former is applied to ports, that is only at the leaves of the connector.


\subsection{Refinement of the extension}
\label{sec:refinement}

When we apply $x, y \in \ai(\act{P} \cup \fire{P} \cup \inhib{P})$ to
compose behaviour with operational semantics of \defn{composition},
$\intsem{x}(\bB) = \intsem{y}(\bB)$ does not imply $x = y$.  $\ai$ axioms
are not complete (although still sound) with respect to $\sim$, since this
equivalence is weaker than $\simeq$. Consequently, on $\ct(\act{P} \cup
\fire{P} \cup \inhib{P})$, $\sim$ is also weaker than $\simeq$.

\begin{figure*}
  \centering
  \begin{subfigure}[b]{1em}
    \centering
    \begin{tikzpicture}
      \node (P) at (0,0) {$\inhib{p}$};
      \node (Q) at (0,-1) {$\fire{q}$};
      \node at (0,-1.5) {};
      \path[->]
      (P) edge (Q);
    \end{tikzpicture}
    \caption{}
    \label{fig:equiv:trees:simple}
  \end{subfigure}
  \hspace{1cm} 
  \begin{subfigure}[b]{2em}
    \centering
    \begin{tikzpicture}
      \node at (0,-1) {$\inhib{p}\, \fire{q}$};
      \node at (0,-2) {};
    \end{tikzpicture}
    \caption{}
    \label{fig:equiv:trees:simple:down}
  \end{subfigure}
  \hspace{2cm} 
  \begin{subfigure}[b]{25mm}
    \centering
    \begin{tikzpicture}
      \node (P) at (1,0) {$\fire{p}$};
      \node (Q) at (1,-1) {$\inhib{q}$};      
      \node (R) at (0,-2) {$\fire{r}$};
      \node (S) at (2,-2) {$\fire{s}$};
      \path[->]
      (P) edge (Q) 
      (Q) edge (R) edge (S);      
    \end{tikzpicture}
    \caption{}
    \label{fig:equiv:trees:adv}
  \end{subfigure}
  \hspace{1cm}
  \begin{subfigure}[b]{25mm}
    \centering
    \begin{tikzpicture}
      \node (P) at (1,0) {$\fire{p}$};
      \node (R) at (0,-2) {$\inhib{q}\, \fire{r}$};
      \node (S) at (2,-2) {$\inhib{q}\, \fire{s}$};
      \path[->]
      (P) edge (R) edge (S);      
    \end{tikzpicture}
    \caption{}
    \label{fig:equiv:trees:adv:down}
  \end{subfigure}
  \mycaption{Two pairs of equivalent trees: (\subref{fig:equiv:trees:simple}),
    (\subref{fig:equiv:trees:simple:down}) and (\subref{fig:equiv:trees:adv}),
    (\subref{fig:equiv:trees:adv:down})}
  \label{fig:equiv:trees}
\end{figure*}

\begin{example}
Let $P = \{p,q,r,s\}$ and consider the $\ct(\act{P} \cup \fire{P} \cup
\inhib{P})$ trees shown in \fig{equiv:trees}.  The interaction semantics of
the tree in \fig{equiv:trees:simple} is $\intsem{\inhib{p} \rightarrow
  \fire{q}} = \{\inhib{p},\, \inhib{p}\,\fire{q}\}$.  However, the
interaction $\inhib{p}$ does not contain any firing ports.  Therefore, as
mentioned above (\lem{nofiring}), it does not influence component
synchronisation and we have $\inhib{p} \rightarrow
\fire{q}\ \sim\ \inhib{p}\,\fire{q}$ (\cf \fig{equiv:trees:simple:down}).

The causal interaction tree in \fig{equiv:trees:adv} also defines a
redundant interaction.  Indeed,
\[
  \intsem{
    \fire{p} \rightarrow \inhib{q} \rightarrow (\fire{r} \oplus \fire{s})
  }
  \ =\ 
  \left\{
  \fire{p},\, 
  \fire{p}\, \inhib{q}\,,\, 
  \fire{p}\, \inhib{q}\, \fire{r}\,,\, 
  \fire{p}\, \inhib{q}\, \fire{s}\,,\, 
  \fire{p}\, \inhib{q}\, \fire{r}\, \fire{s}\,
  \right\}\,.
\]

Although the interaction $\fire{p}\, \inhib{q}$ does contain a firing port
$\fire{p}$, it is redundant (\lem{minimal}). We conclude, therefore, that
the causal interaction trees in \fig{equiv:trees:adv} and
\fig{equiv:trees:adv:down} are equivalent, since
\[
  \intsem{
    \fire{p} \rightarrow (\inhib{q}\,\fire{r} \oplus \inhib{q}\,\fire{s})
  }
  \ =\ 
  \left\{
  \fire{p},\, 
  \fire{p}\, \inhib{q}\, \fire{r}\,,\, 
  \fire{p}\, \inhib{q}\, \fire{s}\,,\, 
  \fire{p}\, \inhib{q}\, \fire{r}\, \fire{s}\,
  \right\}\,.
\]
\end{example}

The above example illustrates the idea that the nodes of causal interaction
trees, which do not contain firing ports, can be ``pushed'' down the tree.

Another notable case leading to redundant interactions corresponds to trees
containing {\em contradictory port typings}.  For example, either of the
two equivalent trees $\inhib{p} \rightarrow \fire{p}$ and
$\inhib{p}\,\fire{p}$ authorises the interaction $\inhib{p}\,\fire{p}$.
However, when considered in the context of the rule \eq{rule:trans}, this
interaction generates two conflicting premises $q_i \goesto[p] q_i'$ and
$q_i \noffer[p]$.  Thus, this instance of the rule \eq{rule:trans} does not
authorise any transitions and the interaction $\inhib{p}\,\fire{p}$ can be
safely discarded.  This example corresponds to the additional axiom
$\fire{p} \Rightarrow \act{p}$ imposed in \cite{BliSif11-constraints-sc} on
the Boolean formul\ae\ in $\sB[\act{P},\fire{P}]$.  Similarly, redundant
interactions appear when a tree contains other distinct port typings of the
same port: $\act{p}$ and $\inhib{p}$, generating conflicting premises $q_i
\offer[p]$ and $q_i \noffer{p}$; $\act{p}$ and $\fire{p}$, whereof the
former generates the premise $q_i \offer[p]$ redundant alongside the
premise $q_i \goesto[p] q_i'$ generated by the latter.

Below, we provide a set of axioms reducing interaction redundancy.  We
enrich axioms for $\ct(\act{P} \cup \fire{P} \cup \inhib{P})$ from
\cite{BliSif10-causal-fmsd} by adding some new ones, specific for the
trivalued port attribute.

\begin{axioms}
  \begin{enumerate}
  \item \label{ax:nodes} For all $p \in P$ and $a \subseteq \act{P} \cup
    \fire{P} \cup \inhib{P}$ such that $a \neq \emptyset$,
    \begin{enumerate}
    \item $a \cdot 0 = 0$,
    \item $a \cdot 1 = a$, for $a \neq 0$,
    \item $\fire{p}\cdot\act{p} = \fire{p}$ (\cf the additional axiom
      $\fire{p} \Rightarrow \act{p}$ in $\cru(\act{P} \cup \fire{P} \cup
      \inhib{P})$),
    \item $\fire{p}\cdot\inhib{p} = \act{p}\cdot\inhib{p} = 0$.
    \end{enumerate}
  \item \label{ax:par} Parallel composition, `$\oplus$', is associative,
    commutative, idempotent, and its identity element is $0$.
  \item \label{ax:zero:leaf} $a \rightarrow 0 = a$, for all $a
    \subseteq \act{P} \cup \fire{P} \cup \inhib{P}$.
  \item \label{ax:zero:node} $0 \rightarrow t = 0$, for all $t \in
    \ct(\act{P} \cup \fire{P} \cup \inhib{P})$.
  \item \label{ax:pushdown:node} $c \rightarrow a \rightarrow b 
    \rightarrow t = c \rightarrow ab \rightarrow t$ for all $a,b,c
    \subseteq \act{P} \cup \fire{P} \cup \inhib{P}$, such that 
    $\firesup{a} = \emptyset$, and $t \in \ct(\act{P} \cup \fire{P}
    \cup \inhib{P})$.
  \item \label{ax:pushdown:port} $ap \rightarrow b = ap
    \rightarrow bp$ for all $a,b \subseteq \act{P} \cup \fire{P} \cup
    \inhib{P}$, $p \in \act{P} \cup \fire{P} \cup  \inhib{P} $.
  \item \label{ax:relation:operators}
    $a \rightarrow (t_1 \oplus t_2) = a \rightarrow t_1\ \oplus\ a
    \rightarrow t_2$, for all $a \subseteq \act{P} \cup
    \fire{P} \cup \inhib{P}$, $t_1, t_2 \in
    \ct(\act{P} \cup \fire{P} \cup \inhib{P})$.
  \end{enumerate}
\end{axioms}

\axs{nodes} equalise redundant interactions due to contradictory port
typings, whereas \ax{pushdown:node} eliminates the nodes with empty firing
support.  \axs{par}, \ref{ax:zero:leaf}, \ref{ax:zero:node} and
\ref{ax:relation:operators} are the same as in
\cite{BliSif10-causal-fmsd}.\footnote{
  The two remaining axioms from \cite{BliSif10-causal-fmsd} are replaced by
  Lemmas~\ref{lem:nofiring:leaf} and \ref{lem:one:node} in this paper.
}

\begin{proposition}
  The above axiomatisation is sound with respect to $\sim$.
\end{proposition}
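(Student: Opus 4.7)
The plan is to exploit \prop{congruence}, which establishes that $\sim$ is a congruence on $\ct(\act{P} \cup \fire{P} \cup \inhib{P})$; this reduces soundness of the axiomatisation to checking, instance by instance, that each axiom yields $\sim$-equivalent terms, since then any derivation respects $\sim$ by structural induction on its shape. I would dispatch the axioms in three batches.

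First, the axioms inherited from the $\simeq$-sound axiomatisation of \cite{BliSif10-causal-fmsd} --- namely \ax{par}, \ax{zero:leaf}, \ax{zero:node}, \ax{relation:operators}, together with the semi-ring parts of \axs{nodes} ($a \cdot 0 = 0$ and $a \cdot 1 = a$) --- transfer directly, since $\sim$ is weaker than $\simeq$. Axiom \ax{pushdown:port} is equally easy: by idempotency and commutativity of synchronisation in $\ai$, $ap \cdot bp = ap \cdot b$, so the two sides even produce identical interaction sets and the equality already holds under $\simeq$.

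Second, for the three remaining parts of \axs{nodes}, whose equalities genuinely require $\sim$ (the two sides have different $\simeq$-semantics), I would argue directly from rule \eq{rule:trans}. The equivalence $\fire{p} \cdot \act{p} \sim \fire{p}$ holds because any firing premise $q_i \goesto[p] q_i'$ already forces $q_i \offer[p]$ by \defn{behaviour}, so the activation premise contributed by $\act{p}$ is redundant and exactly the same transitions are generated. The equivalences $\fire{p} \cdot \inhib{p} \sim 0$ and $\act{p} \cdot \inhib{p} \sim 0$ follow because each imposes mutually contradictory premises $q_i \offer[p]$ and $q_i \noffer[p]$ on the same port, so no instance of rule \eq{rule:trans} fires and the interaction contributes no transitions at all.

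The substantive case is \ax{pushdown:node}. Expanding the interaction semantics gives $|c \rightarrow a \rightarrow b \rightarrow t| = c + ca + cab + cab\,|t|$ while $|c \rightarrow ab \rightarrow t| = c + cab + cab\,|t|$, so the symmetric difference between the two sets of interactions is exactly $\{ca\}$. Since $\firesup{a} = \emptyset$, we have $\firesup{ca} = \firesup{c}$, and I would split into two cases. If $\firesup{c} = \emptyset$, the symmetric difference contains only interactions with empty firing support and \lem{nofiring} gives equality of the composed behaviours. Otherwise, $c$ itself belongs to both interaction sets, $c \subseteq ca$, and $\firesup{c} = \firesup{ca}$, so \lem{minimal} shows that including $ca$ on the left-hand side is inert. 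The main obstacle is precisely this last step: it requires a careful expansion of the three-level causal chain and a clean split depending on the firing support of the root $c$, after which the two redundancy lemmas of \secn{semantic} do the remaining work.
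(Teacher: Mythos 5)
Your proof is correct and follows essentially the same route as the paper: reduce soundness to an axiom-by-axiom check via the congruence property, dispatch the inherited axioms and \ax{pushdown:port} because they already hold under $\simeq$, and handle \axs{nodes}(c),(d) and \ax{pushdown:node} via the redundancy of conflicting or subsumed interactions (Lemmas~\ref{lem:nofiring} and \ref{lem:minimal}). The only differences are cosmetic: you justify $\fire{p}\cdot\act{p}\sim\fire{p}$ directly from the offer implication in \defn{behaviour} rather than by citing \lem{minimal}, and your case split on $\firesup{c}$ in the \ax{pushdown:node} argument is harmless but unnecessary, since \lem{minimal} already covers the case $\firesup{c}=\emptyset$.
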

\begin{proof}
Since, by \prop{congruence}, the equivalence relation $\sim$ is a
congruence, it is sufficient to show that all the axioms respect $\sim$.
This is proved by verifying that the semantics for left and right sides
coincide.

\axs{par}, \ref{ax:zero:leaf}, \ref{ax:zero:node} and
\ref{ax:relation:operators} are the same as in \cite{BliSif10-causal-fmsd}.
Hence, their respective left- and right-hand sides are related by $\simeq$,
which is stronger than $\sim$.  \ax{nodes}(a) and \ax{nodes}(b) are
trivial. \ax{nodes}(c) is a consequence of \lem{minimal}.  In the
\ax{nodes}(d), both pairs $p$ and $\inhib{p}$, and $\fire{p}$ and
$\inhib{p}$ produce conflicting premises in the rule \eq{rule:trans} and,
therefore, do not generate any transitions.  For the \ax{pushdown:node}, we
have
\begin{align}
  \intsem{c \rightarrow a \rightarrow b \rightarrow t} 
  &= \{c,\, a\,c,\, a\,b\,c\} \cup \setdef{a\,b\,c\,a_2}{a_2 \in \intsem{t}}\\
  \intsem{c \rightarrow ab \rightarrow t} 
  &= \{c,\, a\,b\,c\} \cup \setdef{a\,b\,c\,a_2}{a_2 \in \intsem{t}}
\end{align}
The only difference between the interaction semantics of the two trees is
the interaction $ac$.  However, any transition authorised by the rule
\eq{rule:trans} with this interaction is also authorised with interaction
$c$, since $\firesup{a} = \emptyset$ (\lem{minimal}).  Hence, the composed
systems coincide.

For the \ax{pushdown:port}, we have $\intsem{ap \rightarrow b} = \{ap,\,
abp\} = \intsem{ap \rightarrow bp}$.  Thus $ap \rightarrow b \simeq ap
\rightarrow bp$, which implies $ap \rightarrow b \sim ap \rightarrow bp$.
\end{proof}

Notice that our axiomatisation is not complete.  For instance, the equivalence $p\rightarrow q \oplus q \rightarrow p \sim p \oplus q$ cannot be derived from the axioms.

\begin{lemma} \label{lem:nofiring:leaf} 
  For all $a,b \subseteq \act{P} \cup \fire{P} \cup \inhib{P}$, such that
  $\firesup{b} = \emptyset$, holds the equality $a \rightarrow b = a$.
\end{lemma}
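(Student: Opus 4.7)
The plan is to establish the lemma by a direct semantic argument invoking \lem{minimal}, rather than by syntactic manipulation through the axioms. The key observation is that the two sides of the equality have almost identical interaction semantics, differing only by a single interaction that turns out to be redundant in any composition.

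First, I would compute the interaction semantics of both sides using the definition \eq{ctsem}. On the right we simply have $|a| = a$, so $\intsem{a} = \{a\}$ (viewing $a$ as an element of $2^{\act{P}\cup\fire{P}\cup\inhib{P}}$). On the left, $|a \rightarrow b| = a(1+b) = a + ab$, so $\intsem{a \rightarrow b} = \{a,\, a\cup b\}$. The two sets of interactions therefore differ precisely by the additional interaction $a \cup b$.

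Next, I would exploit the hypothesis $\firesup{b} = \emptyset$. By definition of the firing support, $\firesup{a \cup b} = \firesup{a} \cup \firesup{b} = \firesup{a}$. Moreover, trivially $a \subseteq a \cup b$. Thus the interaction $a \in \intsem{a}$ witnesses exactly the hypothesis of \lem{minimal} applied to $\gamma_1 = \intsem{a} = \{a\}$ and $\gamma_2 = \intsem{a \rightarrow b} = \{a, a \cup b\}$: we have $a \in \gamma_1$, $a \subseteq (a \cup b)$, and $\firesup{a} = \firesup{a \cup b}$. Hence $\gamma_1(\bB) = \gamma_2(\bB)$ for every family of behaviours $\bB$, which by definition of $\sim$ yields $a \rightarrow b \sim a$, as required.

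I expect no real obstacle here: the argument is essentially a one-line application of \lem{minimal} after reading off the interaction semantics. The only point worth being careful about is that the lemma's equality must be understood modulo $\sim$ rather than modulo $\simeq$ (indeed $\intsem{a \rightarrow b} \neq \intsem{a}$ as plain interaction sets whenever $b \neq 0$), which is consistent with the earlier discussion in \secn{refinement} that $\sim$ is strictly weaker than $\simeq$ on $\ct(\act{P} \cup \fire{P} \cup \inhib{P})$. Note also that \lem{nofiring} is not needed because the case $\firesup{a} = \emptyset$ is already handled uniformly by \lem{minimal}.
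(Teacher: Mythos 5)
Your argument is semantically correct: the computation $\intsem{a \rightarrow b} = \{a,\ a \cup b\}$ versus $\intsem{a} = \{a\}$ is right, and the application of \lem{minimal} with $\gamma_1 = \{a\}$ and $\gamma_2 = \{a,\ a\cup b\}$ is exactly the situation that lemma covers, since $a \subseteq a\cup b$ and $\firesup{a\cup b} = \firesup{a}$ whenever $\firesup{b} = \emptyset$. So you have a valid one-step proof of $a \rightarrow b \sim a$. The paper, however, proves the lemma by a purely syntactic derivation from the axioms: $a \rightarrow b = a \rightarrow b \rightarrow 0 \rightarrow 0 = a \rightarrow (b\cdot 0) \rightarrow 0 = a \rightarrow 0 \rightarrow 0 = a$, using \ax{zero:leaf} and \ax{pushdown:node}. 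This difference matters more than it might first appear: as the footnote after the axiom list explains, Lemmas~\ref{lem:nofiring:leaf} and \ref{lem:one:node} are intended to \emph{replace} two axioms of the earlier axiomatisation of $\ct(P)$, i.e., to show that those former axioms are now derivable and hence redundant. A semantic argument establishes that the equation is sound with respect to $\sim$, but says nothing about its derivability from Axioms 1--7, so it cannot serve that purpose; it also cannot be chained with the other axioms in the rewriting procedure of \prop{normal:tree}, which manipulates trees syntactically. In short, your route buys directness and is the right tool if the lemma is read as a statement about $\sim$ (which, as you note, is the only reading under which it can hold, since $\intsem{a\rightarrow b} \neq \intsem{a}$ in general); the paper's route buys the stronger fact that the equation is a theorem of the axiom system, which is what the surrounding development actually uses.
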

\begin{proof}
$a \rightarrow b = a \rightarrow b \rightarrow 0 \rightarrow 0 = a
  \rightarrow b \cdot 0 \rightarrow 0 = a \rightarrow 0 \rightarrow 0 = a$
  (\axss{zero:leaf}{pushdown:node})
\end{proof}

\begin{lemma} \label{lem:one:node} 
  For all $a \subseteq \act{P} \cup \fire{P} \cup \inhib{P}$ and $t \in
  \ct(\act{P} \cup \fire{P} \cup \inhib{P})$, holds the equality $a
  \rightarrow 1 \rightarrow t = a \rightarrow t$.
\end{lemma}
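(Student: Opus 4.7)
The plan is to proceed by structural induction on $t$, with \ax{pushdown:node} providing the pivotal step: since $\firesup{1} = \emptyset$, this axiom allows the intermediate $1$ node to be absorbed into the level immediately below.

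For the inductive case $t = b \rightarrow t'$, a single application of \ax{pushdown:node} (with top $a$, middle $1$, bottom $b$, and subtree $t'$), combined with the identity $1 \cdot b = b$ from \ax{nodes}(b), yields $a \rightarrow 1 \rightarrow b \rightarrow t' = a \rightarrow 1\cdot b \rightarrow t' = a \rightarrow b \rightarrow t'$ directly. For the parallel case $t = t_1 \oplus t_2$, two applications of \ax{relation:operators} distribute causality over $\oplus$, giving $a \rightarrow 1 \rightarrow (t_1 \oplus t_2) = (a \rightarrow 1 \rightarrow t_1) \oplus (a \rightarrow 1 \rightarrow t_2)$; the induction hypothesis then collapses each summand, and a final use of \ax{relation:operators} in reverse recombines them into $a \rightarrow (t_1 \oplus t_2)$.

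The step that requires a touch of care is the base case $t = b$, since \ax{pushdown:node} is stated for a three-level causality pattern whereas a leaf offers only two levels. The plan here is to pad artificially: by \ax{zero:leaf} we may rewrite $b$ as $b \rightarrow 0$, turning the goal into $a \rightarrow 1 \rightarrow b \rightarrow 0$; now \ax{pushdown:node} together with \ax{nodes}(b) collapses this to $a \rightarrow b \rightarrow 0$, and a second invocation of \ax{zero:leaf} strips the padding to leave $a \rightarrow b$. This pad-then-unpad manoeuvre is the main, if small, obstacle; beyond it the verification is entirely mechanical and mirrors the style of \lem{nofiring:leaf}.
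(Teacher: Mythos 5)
Your proof is correct and follows essentially the same route as the paper: the paper writes a nonzero $t$ as $\bigoplus_{i=1}^{n} r_i \rightarrow t_i$ (with $t_i$ possibly $0$, which is exactly your pad-then-unpad device for leaves), distributes with \ax{relation:operators}, and absorbs the $1$ via \ax{pushdown:node} and $1\cdot r_i = r_i$, just as you do; your structural induction merely packages the same steps recursively. The only hair to note is that $1\cdot b = b$ via \ax{nodes}(b) needs $b\neq 0$, but the $b=0$ case collapses anyway via \axs{nodes}(a), \ref{ax:zero:leaf} and \lem{nofiring:leaf}.
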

\begin{proof}
If $t = 0$ the statement of this lemma is a special case of
\lem{nofiring:leaf} with $b = 1$.  If $t \neq 0$ it can be represented as a
parallel composition of non-zero trees $t = \bigoplus_{i=1}^{n} r_i
\rightarrow t_i$, with $r_i \subseteq \act{P} \cup \fire{P} \cup
\inhib{P}$.  By \axs{pushdown:node} and \ref{ax:relation:operators}, we
have
\[
a \rightarrow 1 \rightarrow t 
\ =\ 
\bigoplus_{i=1}^{n} (a \rightarrow 1 \rightarrow r_i \rightarrow t_i)
\ =\ 
\bigoplus_{i=1}^{n} (a \rightarrow r_i \rightarrow t_i)
\ =\ 
a \rightarrow \bigoplus_{i=1}^{n} (r_i \rightarrow t_i) 
\ =\ 
a \rightarrow t\,.
\]
\end{proof}

\begin{lemma} \label{lem:pushdown:node}
  For all $a,b_i,c \subseteq \act{P} \cup \fire{P} \cup \inhib{P}$, such
  that $\firesup{a} = \emptyset$ and $t_i \in \ct(\act{P} \cup \fire{P}
  \cup \inhib{P})$, holds the equality 
\[
c \rightarrow a \rightarrow
\bigoplus_{i=1}^{n} (b_i \rightarrow t_i) = c \rightarrow
\bigoplus_{i=1}^{n} (ab_i \rightarrow t_i)\,.
\]
\end{lemma}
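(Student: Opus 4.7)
The plan is to reduce this lemma to a combination of Axiom~\ref{ax:pushdown:node} (which merges a firing-empty node into the subsequent one in a single causal chain) and Axiom~\ref{ax:relation:operators} (which distributes causality over parallel composition). The structural obstacle is that Axiom~\ref{ax:pushdown:node} is stated for a linear causal chain $c \rightarrow a \rightarrow b \rightarrow t$, whereas here the successor of $a$ is a parallel composition $\bigoplus_{i=1}^{n}(b_i \rightarrow t_i)$; so I first need to break up the parallel composition into $n$ independent chains, apply the pushdown axiom pointwise, and then reassemble.

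Concretely, I would start from the left-hand side $c \rightarrow a \rightarrow \bigoplus_{i=1}^{n} (b_i \rightarrow t_i)$ and apply Axiom~\ref{ax:relation:operators} to the sub-term $a \rightarrow \bigoplus_{i=1}^{n}(b_i \rightarrow t_i)$, turning it into $\bigoplus_{i=1}^{n}(a \rightarrow b_i \rightarrow t_i)$. A second application of Axiom~\ref{ax:relation:operators}, this time to the outer $c \rightarrow (\,\cdot\,)$, yields $\bigoplus_{i=1}^{n}(c \rightarrow a \rightarrow b_i \rightarrow t_i)$. Each summand is now a linear chain matching the premise of Axiom~\ref{ax:pushdown:node}, and since $\firesup{a} = \emptyset$ by hypothesis, that axiom applies to rewrite the summand as $c \rightarrow ab_i \rightarrow t_i$.

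To conclude, I would apply Axiom~\ref{ax:relation:operators} once more in the reverse direction to factor the common prefix $c$ back out of the parallel composition, obtaining $c \rightarrow \bigoplus_{i=1}^{n}(ab_i \rightarrow t_i)$, which is the right-hand side. The derivation is purely equational and uses only already-stated axioms; no appeal to the semantic characterisation (Lemmas~\ref{lem:nofiring} or~\ref{lem:minimal}) is needed, since those have been absorbed into Axiom~\ref{ax:pushdown:node}.

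The only mild subtlety is verifying that Axiom~\ref{ax:relation:operators}, stated for a binary $\oplus$, extends to an $n$-ary parallel composition; this follows by a trivial induction on $n$ using associativity and commutativity of $\oplus$ (Axiom~\ref{ax:par}), so I would just invoke it without dwelling on the point. I do not anticipate any real obstacle: the lemma is essentially the distributive closure of Axiom~\ref{ax:pushdown:node} over parallel composition, and writing out the three equational steps above is the entire proof.
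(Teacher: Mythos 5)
Your derivation is correct and is essentially identical to the paper's own proof, which likewise distributes with Axiom~\ref{ax:relation:operators} to obtain $\bigoplus_{i=1}^{n}(c \rightarrow a \rightarrow b_i \rightarrow t_i)$, applies Axiom~\ref{ax:pushdown:node} to each chain, and refactors. The extra care you take about the $n$-ary extension of the binary $\oplus$ distribution is fine but the paper treats it as implicit.
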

\begin{proof}
As above, applying \axs{pushdown:node} and \ref{ax:relation:operators}, we
have
\[
c \rightarrow a \rightarrow \bigoplus_{i=1}^{n} (b_i \rightarrow t_i)
\ =\
\bigoplus_{i=1}^{n} (c \rightarrow a \rightarrow b_i \rightarrow t_i) 
\ =\
\bigoplus_{i=1}^{n} (c \rightarrow ab_i \rightarrow t_i) 
\ =\
c \rightarrow \bigoplus_{i=1}^{n} (ab_i \rightarrow t_i)\,.
\]
\end{proof}

\begin{definition} \label{defn:tree:normal}
  A causal interaction tree $t \in \ct(\act{P} \cup \fire{P} \cup
  \inhib{P})$ is in \emph{normal form} if it satisfies the following
  properties:
  \begin{enumerate}
  \item All nodes of $t$ except roots have non-empty firing support.
  \item There are no causal dependencies between the same typing of the
    same port in $t$, that is for any causal chain $a \rightarrow \dots
    \rightarrow b$ within $t$, we have $a \cap b = \emptyset$.
  \item There are no causal dependencies between different port typings of
    the same port in $t$, other than dependencies of the form $ap
    \rightarrow \dots \rightarrow b \fire{p}$, where $a,b \subseteq \act{P}
    \cup \fire{P} \cup \inhib{P}$, $p \in P$.
  \end{enumerate}
\end{definition}

\begin{proposition}[Normal form for causal interaction trees]
  \label{prop:normal:tree}
  Every causal interaction tree $t \in \ct(\act{P} \cup \fire{P} \cup
  \inhib{P})$ has a normal form $t = \tilde{t} \in \ct(\act{P} \cup
  \fire{P} \cup \inhib{P})$.
\end{proposition}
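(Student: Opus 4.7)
The plan is to exhibit a normalization procedure that, starting from any $t \in \ct(\act{P} \cup \fire{P} \cup \inhib{P})$, produces $\tilde{t}$ satisfying the three conditions of \defn{tree:normal} with $t = \tilde{t}$ derivable from the axioms. The procedure has two phases: first flatten $t$ into a parallel composition of pure causal chains, then clean up each chain individually.

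For the flattening, apply \ax{relation:operators} exhaustively to rewrite every subterm $a \rightarrow (t_1 \oplus t_2)$ as $(a \rightarrow t_1) \oplus (a \rightarrow t_2)$. The result is a term of the form $\bigoplus_{i=1}^{k} \chi_i$ in which every $\chi_i$ is a single causal chain $a_{i,1} \rightarrow \dots \rightarrow a_{i,n_i}$. Since $\oplus$ introduces no dependency between its summands, the conditions of the normal form become chain-local: condition~1 reduces to requiring $\firesup{a_{i,j}} \neq \emptyset$ for $j \geq 2$ in each chain, and conditions~2 and~3 concern only ancestor/descendant pairs within a single chain.

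For chain cleanup, work one chain at a time. First, eliminate every internal $1$ by \lem{one:node} and every $0$ by \ax{nodes}(a), \ax{zero:leaf}, and \ax{zero:node}. Next, enforce condition~1 by scanning the chain from root to leaf and, for every non-root node $a$ with $\firesup{a} = \emptyset$, merging $a$ into its unique child via \ax{pushdown:node}. Finally, enforce conditions~2 and~3 by inspecting each ancestor/descendant pair $(a,b)$ in the chain. For a shared port typing $p$, push $p$ from $a$ down along the chain using \ax{pushdown:port} at each intermediate level; when it arrives at $b$, the duplicate collapses by idempotency of $\cdot$ inherited from $\ai$. For a forbidden cross-typing, propagate the ancestor's typing of $p$ down to $b$ in the same way; the resulting combined interaction at $b$ then simplifies either via \ax{nodes}(c), which absorbs an $\act{p}$ sitting below a $\fire{p}$, or via \ax{nodes}(d), which collapses $\fire{p}\,\inhib{p}$ and $\act{p}\,\inhib{p}$ to $0$, after which the $0$ is eliminated as above. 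The only cross-typing dependency that survives this sweep is $a p \rightarrow \dots \rightarrow b\,\fire{p}$, which is precisely the exception allowed by condition~3.

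Each rewrite strictly decreases the lexicographic measure that counts first occurrences of the constants $0,1$, then violating ancestor/descendant pairs across chains, then internal nodes with empty firing support, so the procedure terminates. The main obstacle is the cross-typing analysis for condition~3: the six forbidden orderings of $\act{p}, \fire{p}, \inhib{p}$ in a chain each require a tailored combination of \ax{pushdown:port} and \ax{nodes}, and one must verify that propagating a typing down a chain does not create fresh conflicts further below\mdash checked by an induction on chain length. Correctness of every rewrite follows from soundness of the individual axioms with respect to $\sim$ combined with the congruence property \prop{congruence}, which lifts local rewrites to arbitrary contexts.
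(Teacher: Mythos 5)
Your first phase takes a genuinely different route from the paper's. You distribute `$\rightarrow$' over `$\oplus$' exhaustively so as to reduce $t$ to a parallel composition of causal chains and then work chain by chain; the paper instead performs the empty-firing-support elimination in place on the tree, isolating one offending node at a time with \ax{relation:operators} and \lem{pushdown:node} and proceeding bottom-up (choosing a node whose subtree is already clean). Both are legitimate, since \defn{tree:normal} does not forbid the flattened shape and the proposition only asserts existence of a normal form; your version buys chain-locality of conditions 2 and 3 at the price of a possibly exponential blowup in the number of chains. Your second phase (propagating a port typing down a chain and collapsing via \ax{nodes}) is essentially the paper's.

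There are, however, two concrete gaps. First, a non-root \emph{leaf} with empty firing support has no child to merge into, so \ax{pushdown:node} is inapplicable there; you need \lem{nofiring:leaf} (as the paper uses) to delete such a leaf. Second, and more seriously, your treatment of conditions 2 and 3 stops too early. After you push the ancestor's typing $\tilde p$ down through the intermediate nodes $c_1,\dots,c_k$ and collapse the duplicate at $b$, every $c_i$ now carries $\tilde p$, so the single violating pair $(a\tilde p,\, b\hat p)$ has been replaced by the violating pairs $(a\tilde p,\, c_i\tilde p)$ and $(c_i\tilde p,\, c_j\tilde p)$. Taken literally, the procedure does not converge, and your claim that each rewrite strictly decreases the lexicographic measure is false at the level of individual axiom applications, since the pair count goes up during propagation. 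The missing step is precisely the one with which the paper's displayed derivations end: apply \ax{pushdown:port} in the reverse direction, $a\tilde p \rightarrow c\tilde p = a\tilde p \rightarrow c$, to strip the propagated copies off $c_1,\dots,c_k$ (and, in Case~1, off $b$). Only the whole propagate--collapse--retract sequence, treated as one macro-step, decreases your measure. Note also that your worry about ``fresh conflicts further below $b$'' is misplaced---the propagated typing never survives past $b$; the damage to repair lies on the segment between $a$ and $b$.
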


\begin{proof}
  Consider $t \in \ct(\act{P} \cup \fire{P} \cup \inhib{P})$.  We start by
  computing $t_1 = t$ with all nodes, except potentially the roots, having
  non-empty firing support.  
  
  Let $a$ be a non-root node of $t$ with $\firesup{a} = \emptyset$, such
  that the tree $s$ rooted in $a$ does not have any nodes with empty firing
  support.  If $s$ is empty, that is $a$ is a leaf then remove $a$ from the
  tree (\lem{nofiring:leaf}).  Otherwise, let $c$ be the parent of $a$, which
  exists since $a$ is not a root and move the parallel composition operator
  up using \ax{relation:operators}:
  \begin{equation}
    \label{eq:plusup}
    c \rightarrow 
    \left((a \rightarrow s) \oplus \bigoplus_{i=1}^{n} t_i\right)
    =
    (c \rightarrow a \rightarrow s) \oplus 
    \left(\bigoplus_{i=1}^{n} c \rightarrow t_i\right)\,.
  \end{equation}
  The sub-tree $s$ can be further decomposed as $s = \bigoplus_{i=1}^n (b_i
  \rightarrow s_i)$, so, by \lem{pushdown:node}, we have
  \begin{equation}
    \label{eq:pushdown}
    c \rightarrow a \rightarrow s \
    =\ c \rightarrow a \rightarrow \bigoplus_{i=1}^n (b_i \rightarrow s_i)\
    =\ c \rightarrow \bigoplus_{i=1}^{n} (ab_i \rightarrow s_i)\,.
  \end{equation}
  Each of nodes $ab_i$ has non-empty firing support, since $\firesup{b_i} =
  \emptyset$ by the choice of $a$.  Substituting \eq{pushdown} into
  \eq{plusup} and applying \ax{relation:operators}, we obtain
  \[
    \left(c \rightarrow \bigoplus_{i=1}^{n} (ab_i \rightarrow s_i)\right) 
    \oplus 
    \left(\bigoplus_{i=1}^{n} c \rightarrow t_i\right) 
    =
    c \rightarrow \left(\left(\bigoplus_{i=1}^{n} ab_i \rightarrow s_i\right) 
    \oplus 
    \bigoplus_{i=1}^{n} t_i\right)\,.
  \]

  In the resulting tree, there is one node with empty firing support less
  than in $t$.  Hence, repeating this procedure as long as there are such
  nodes, we will compute a tree $t_1$, where all nodes except roots have
  non-empty firing support.  This computation is confluent, since the order
  is irrelevant among causally independent nodes, whereas among causally
  dependent ones it is fixed by the algorithm.
  
  Consider a causal chain $a\tilde{p} \rightarrow \dots \rightarrow
  b\hat{p}$ within $t_1$, with $\tilde{p}$ and $\hat{p}$ being two typings
  of the same port.  If $\tilde{p} = \act{p}$ and $\hat{p} = \fire{p}$,
  there is nothing to do, since such dependencies are allowed by
  \defn{tree:normal}.  Otherwise, we propagate $\tilde{p}$ down by applying
  \ax{pushdown:port}:
  \[
    a\tilde{p} \rightarrow c_1 \rightarrow 
    \dots \rightarrow c_k \rightarrow b\hat{p} 
    \ =\ 
    a\tilde{p} \rightarrow c_1\tilde{p} \rightarrow 
    \dots \rightarrow c_k \rightarrow b\hat{p}
    =\ \dots\ =
    a\tilde{p} \rightarrow c_1\tilde{p} \rightarrow 
    \dots \rightarrow c_k\tilde{p} \rightarrow b\hat{p}\tilde{p}\,.
  \]
  {\bf Case 1:} $\tilde{p} = \hat{p}$ or $\tilde{p} = \fire{p}$ and
  $\hat{p} = \act{p}$. We apply \axs{nodes}(c) and \ref{ax:pushdown:port}:
  \[
    a\tilde{p} \rightarrow c_1\tilde{p} \rightarrow 
    \dots \rightarrow c_k\tilde{p} \rightarrow b\hat{p}\tilde{p} 
    \ =\ 
    a\tilde{p} \rightarrow c_1\tilde{p} \rightarrow 
    \dots \rightarrow c_k\tilde{p} \rightarrow b\tilde{p}
    \ =\ 
    a\tilde{p} \rightarrow c_1 \rightarrow 
    \dots \rightarrow c_k \rightarrow b\,.
  \]
  {\bf Case 2:} $\tilde{p} \neq \hat{p}$ and either $\tilde{p} = \inhib{p}$
  or $\hat{p} = \inhib{p}$.  We apply \axs{nodes}(d), \ref{ax:zero:leaf}
  and \ref{ax:pushdown:port}:
  \begin{multline*}
    a\tilde{p} \rightarrow c_1\tilde{p} \rightarrow
    \dots \rightarrow c_k\tilde{p} \rightarrow b\hat{p}\tilde{p} 
   \ =\ 
    a\tilde{p} \rightarrow c_1\tilde{p} \rightarrow 
    \dots \rightarrow c_k\tilde{p} \rightarrow 0\ = \\
   \ =\ 
    a\tilde{p} \rightarrow c_1 \rightarrow 
    \dots \rightarrow c_k \rightarrow 0
   \ =\ 
    a\tilde{p} \rightarrow c_1 \rightarrow \dots \rightarrow c_k\,.
  \end{multline*}
    
  To compute $\tilde{t}$, we apply this transformation to all relevant
  causal chains within $t$.
\end{proof}

\begin{definition} \label{defn:conn:normal}
  An $\ac(\act{P} \cup \fire{P} \cup \inhib{P})$ connector is in
  \textit{normal form} if the following conditions hold.
  \begin{enumerate}
  \item Nodes at every hierarchical level of the connector, except the
    bottom one, have at least one trigger.
  \item Each node at the bottom hierarchical level, is a strong
    synchronisation of pairwise distinct ports.
  \item Every node at the bottom hierarchical level, without firing ports,
    has only triggers as ancestors.
  \end{enumerate}
\end{definition}

\begin{corollary}[Normal form for connectors] 
  Every connector $x \in \ac(\act{P} \cup \fire{P} \cup \inhib{P})$ has an
  equivalent normal form $x \sim \tilde{x} \in \ac(\act{P} \cup \fire{P}
  \cup \inhib{P})$.
\end{corollary}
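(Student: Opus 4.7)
The plan is to lift \prop{normal:tree} from $\ct$ to $\ac$ via the transformations $\tau : \ac(\act{P}\cup\fire{P}\cup\inhib{P}) \to \ct(\act{P}\cup\fire{P}\cup\inhib{P})$ and $\sigma : \ct(\act{P}\cup\fire{P}\cup\inhib{P}) \to \ac(\act{P}\cup\fire{P}\cup\inhib{P})$ recalled in \secn{transformations}. Given $x \in \ac(\act{P}\cup\fire{P}\cup\inhib{P})$, I set $t \bydef{=} \tau(x)$, apply \prop{normal:tree} to obtain a normal-form tree $\tilde{t}$ with $t = \tilde{t}$ derivable from the extended axioms of \secn{refinement}, and define $\tilde{x} \bydef{=} \sigma(\tilde{t})$. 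The equivalence $x \sim \tilde{x}$ is then established by the chain $x \sim \sigma(\tau(x)) = \sigma(t) \sim \sigma(\tilde{t}) = \tilde{x}$: the outer step is the round-trip property $x \simeq \sigma(\tau(x))$ from~\cite{BliSif10-causal-fmsd}, which transfers to $\sim$ since $\simeq$ is stronger, and the inner step combines $t \sim \tilde{t}$ (soundness of the axioms of \secn{refinement} with respect to $\sim$) with the fact that $\sigma$ preserves $\sim$, a direct consequence of the identity $|\sigma(\cdot)| = |\cdot|$ of the $\ai$-semantics of $\sigma$.

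It then remains to verify that $\tilde{x}$ satisfies the three conditions of \defn{conn:normal}. Condition~1 holds for any $\sigma$-image: by \eq{treecon}, the only productions introducing a new hierarchical level, namely $\sigma(a \to t') = [a]'\,[\sigma(t')]$ and $\sigma(t_1 \oplus t_2) = [\sigma(t_1)]'\,[\sigma(t_2)]'$, both produce at least one trigger. Condition~2 follows from $\sigma$ mapping each leaf $a$ of $\tilde{t}$ to the bottom-level synchron $[a]$, together with \ax{nodes}(c), \ax{nodes}(d) and the idempotence of $\cdot$, which collapse duplicated or contradictory port typings so that $a$ is a strong synchronisation of pairwise distinct ports carrying a single typing each. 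Condition~3 uses property~1 of \defn{tree:normal}: a bottom-level node of $\tilde{x}$ without firing ports must come from a leaf $a$ of $\tilde{t}$ with $\firesup{a}=\emptyset$, which by the tree normal form can only be a root; inspection of $\sigma$ shows that every root of a multi-rooted tree is wrapped by $\sigma$ into a top-level trigger $[\sigma(\cdot)]'$, while a singleton leaf root yields a bottom-level node with no ancestors, making the condition vacuous.

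I expect the main obstacle to be the bookkeeping for condition~3: one has to make precise the correspondence between roots of $\tilde{t}$ and top-level trigger sub-terms of $\tilde{x}$, and then trace the chain of ancestors of a bottom-level node back through the recursive unfolding of $\sigma$. The nested brackets produced by the $\oplus$-case, in particular the double-bracket $[[a]]'$ arising when a leaf root sits beside another root, must be handled carefully. None of this is conceptually hard, but it requires a clean structural induction on $\tilde{t}$ and a precise definition of ``hierarchical level'' and ``ancestor'' for $\ac$-terms constructed recursively via $\sigma$.
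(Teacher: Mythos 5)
Your proof follows exactly the paper's own (sketched) argument: pass to $t=\tau(x)$, normalise via \prop{normal:tree}, and map back with $\sigma$, using that $\tau$ and $\sigma$ preserve $\sim$. The extra detail you supply\mdash the explicit equivalence chain through $\sigma(\tau(x))$ and the case-by-case check of \defn{conn:normal} against \eq{treecon}\mdash is precisely what the paper compresses into ``a direct consequence,'' so the approach is the same and correct.
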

\begin{proof}[Sketch of the proof]
  Given a connector $x$, let $t = \tau(x)$ be the equivalent causal
  interaction tree and $\tilde{t} = t$ its normal form.  Put $\tilde{x} =
  \sigma(\tilde{t})$.  Since both $\sigma$ and $\tau$ preserve $\sim$, we
  have $\tilde{x} \sim x$.  Normality of $\tilde{x}$ is a direct
  consequence of that of $\tilde{t}$ and the definition \eq{treecon} of
  $\sigma$.
\end{proof}

\begin{proposition}
  \label{prop:normal:rules}
  Any causal interaction tree $t \in \ct(\act{P} \cup \fire{P} \cup
  \inhib{P})$ can be represented by a system of causal rules with only
  firing ports as effects, \ie having only rules of the form $\fire{p}
  \Rightarrow C$, where $C$ is a DNF Boolean formula on $\fire{P} \cup
  \act{P}$ without negative firing variables.
\end{proposition}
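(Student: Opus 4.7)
My plan is to reduce $t$ to its normal form $\tilde{t}$ via \prop{normal:tree} (so that $t = \tilde{t}$) and then read off the required system by restricting the transformation $R$ of \eq{trees2rules} to firing-port effects. Concretely, for each $\fire{p} \in \fire{P}$ (and for $\true$), I would take the rule $\fire{p} \Rightarrow C_p$, where $C_p$ is obtained from $c_{\fire{p}}(\tilde{t})$ by rewriting every occurrence of $\inhib{q}$ as $\lnot \act{q}$. By the definition of $c_{\fire{p}}$, the cause is already a disjunction over the causal prefixes from roots of $\tilde{t}$ down to nodes containing $\fire{p}$, each prefix contributing the conjunction of the literals of the port typings along its nodes, which yields a DNF formula by construction.

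The syntactic restrictions on $C_p$ follow from \defn{tree:normal}. Condition (2) ensures that no two nodes in a single causal chain share a port typing, so $\fire{p}$ cannot appear in an ancestor of a $\fire{p}$-node; in particular $\fire{p}$ (and a fortiori $\lnot \fire{p}$) does not appear in $C_p$. More generally, because the only source of negation in the encoding is the rule $\inhib{q} \mapsto \lnot \act{q}$, no $\lnot \fire{q}$ literal can ever arise in any $C_p$, giving the required DNF on $\fire{P} \cup \act{P}$ with no negative firing variables. Condition (3) further prevents degenerate chains mixing $\act{q}$, $\fire{q}$ and $\inhib{q}$ in an incompatible way, so each clause is consistent with the extra axiom $\fire{q} \Rightarrow \act{q}$ of $\cru(\act{P}\cup\fire{P}\cup\inhib{P})$.

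The main obstacle is to show that the resulting system $R' = \{\fire{p} \Rightarrow C_p\}_{\fire{p} \in \fire{P} \cup \{\true\}}$ represents $\tilde{t}$ in the sense of $\sim$, \ie that the rules of $R(\tilde{t})$ having an $\act{p}$- or $\inhib{p}$-effect are redundant once we restrict attention to behaviourally relevant interactions. I would argue this using \lem{nofiring} to discard all interactions with empty firing support and \lem{minimal} to absorb those dominated by a stronger interaction with the same firing support. Any valuation satisfying $R'$ and having non-empty firing support can be extended to a valuation satisfying the full $R(\tilde{t})$ by reading the corresponding $\act{q}$ and $\lnot\act{q}$ values off the witness prefix; conversely, any valuation satisfying $R(\tilde{t})$ projects trivially to one satisfying $R'$. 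A short structural induction on $\tilde{t}$, or direct case analysis on the shape of the causal chains, then closes the equivalence.
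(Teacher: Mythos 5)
Your overall strategy is the paper's: keep only the rules with effects in $\fire{P}\cup\{\true\}$ and argue that the discarded $\act{p}$- and $\inhib{p}$-rules are behaviourally redundant via \lem{minimal}. Two remarks, one minor and one substantive. The minor one: the preliminary normalisation of $t$ is unnecessary. The paper applies $R$ to $t$ directly, and the syntactic shape of the causes --- positive DNF over the extended port set, hence no negated firing variables once $\inhib{q}$ is read as $\lnot\act{q}$ --- already follows from the definition of $c_p$, not from the conditions of \defn{tree:normal}, as you yourself observe in passing.

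The substantive point is the direction of the comparison in the key step. \lem{minimal} lets you discard an interaction $a$ only when some $b\subseteq a$ with $\firesup{b}=\firesup{a}$ is already present; so for each $a\in\intsem{\tilde{R}(t)}\setminus\intsem{R(t)}$ you must exhibit a \emph{sub}-interaction of $a$ that satisfies the full system, whereas you propose to \emph{extend} the valuation of $a$. An $a'\supseteq a$ in $\intsem{R(t)}$ does not make $a$ redundant: $a$ imposes strictly fewer premises in rule \eq{rule:trans} and may therefore fire transitions that $a'$ cannot, which is exactly the phenomenon the lemma is guarding against. The paper instead deletes the offending port $p\in\act{P}\cup\inhib{P}$ from $a$ and proves, by a short contradiction argument, that no firing rule of $\tilde{R}(t)$ becomes violated in the process (if $\fire{q}\Rightarrow C_2$ broke, then $p$ would lie on every prefix of $t$ leading to $\fire{q}$, hence on the prefix witnessing $\fire{q}\in a$, so $p$'s own rule would in fact be satisfied by $a$); iterating this and applying \lem{minimal} closes the proof. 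Your witness-prefix idea does yield a correct, arguably cleaner argument, but only if it is used to \emph{shrink} $a$: take $b$ to consist of the firing ports of $a$ together with the union of the prefixes witnessing the satisfied rules of $\tilde{R}(t)$ (including the $\true$-rule). Then $b\subseteq a$, $\firesup{b}=\firesup{a}$, every firing rule is witnessed inside $b$ by construction, and every $\act{p}$- or $\inhib{p}$-rule is witnessed by the initial segment of the prefix on which $p$ occurs, so $b\in\intsem{R(t)}$ and \lem{minimal} applies. As written, however, the proposal's central step points the wrong way and would not go through.
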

\begin{proof}
Applying the transformation $R : \ct(P) \rightarrow \cru(P)$ defined in
\secn{transformations} to a tree $t \in \ct(P)$, gives a system of causal
rules of the form $p \Rightarrow C$, where $C$ is a DNF Boolean formula and
each monomial is a conjunction of the nodes on the way from a root of $t$
to $p$ (some prefix in $t$ leading to $p$, excluding $p$).

We define the transformation $\tilde{R}:\ct(\act{P} \cup \fire{P} \cup
\inhib{P}) \rightarrow \cru(\act{P} \cup \fire{P} \cup \inhib{P})$, by
putting
\begin{equation}
  \label{eq:trees2rules1}
  \tilde{R}(t) \bydef{=} \{p \Rightarrow c_p(t)\}_{p\in \fire{P}\cup\{\true\}}\,,
\end{equation}
that is we omit causal rules for port variables in $\act{P} \cup \inhib{P}$
(in \eq{trees2rules1}, the set of rules is indexed by $p \in
\fire{P}\cup\{\true\}$ as opposed to $p \in P\cup\{\true\}$ in
\eq{trees2rules}).  To prove the equivalence $t \sim \tilde{R}(t)$ it is
sufficient to show $\tilde{R}(t) \sim R(t)$.

$\tilde{R}(t)$ has less constraints than $R(t)$.  Hence, it allows more
interactions.  Let $a \in \intsem{\tilde{R}(t)} \setminus \intsem{R(t)}$,
i.e. there exists $ p \in \act{P} \cup \inhib{P}$, such that $p \in a$ and
the rule $p \Rightarrow C_1$ is violated by $a$.  Let $\tilde{a} = a
\setminus p$.

Assume $\tilde{a} \notin \intsem{\tilde{R}(t)}$, \ie there exists $\fire{q}
\in \fire{P}$ and a rule $(\fire{q} \Rightarrow C_2) \in \tilde{R}(t)$,
such that $\fire{q} \in \tilde{a}$ and the rule $\fire{q} \Rightarrow C_2$
is violated by $\tilde{a}$.  This rule is not violated by $a$. Hence $C_2 =
pC_2'$ and, consequently, $p$ lies on all prefixes in $t$, leading to
$\fire{q}$.  $a \in \intsem{\tilde{R}(t)}$, $\fire{q} \in \tilde{a}
\subseteq a$, thus there is at least one prefix in $t$, leading to
$\fire{q}$ and contained in $a$. As $p$ lies on this prefix, the rule $(p
\Rightarrow C_1)$ is satisfied by $a$, contradicting the conclusion above.
Therefore our assumption is wrong and $\tilde{a} \in
\intsem{\tilde{R}(t)}$.

Since $\tilde{a} \in \intsem{\tilde{R}(t)}$ and $\firesup{\tilde{a}} =
\firesup{a}$, we have, by \lem{minimal}, $\intsem{\tilde{R}(t)}(\bB) =
(\intsem{\tilde{R}(t)} \setminus \{a\})(\bB)$ for any family of behaviours
$\bB$.  Thus, for all $a \in \intsem{\tilde{R}(t)} \setminus
\intsem{R(t)}$, there exists $\tilde{a} \varsubsetneq a$, such that
$\tilde{a} \in \intsem{\tilde{R}(t)}$ and $\firesup{\tilde{a}} =
\firesup{a}$.  By \lem{minimal}, we have $\intsem{\tilde{R}(t)}(\bB) =
\intsem{R(t)}(\bB)$ for any family $\bB$, \ie $R(t) \sim \tilde{R}(t)$.
\end{proof}


\section{Connector synthesis (example)}
\label{sec:example}

\begin{wrapfigure}[7]{r}{0.24\textwidth}
  \centering
  \vspace{-0.5\baselineskip}
  \input{control-main.pspdftex}
  \mycaption{Main module}
  \label{fig:modules}
\end{wrapfigure}
Consider a system providing some given functionality in two modes: {\em
  normal} and {\em backup}.  The system consists of four modules: the
Backup module $A$ can only perform one action $a$; the Main module $B$
(\fig{modules}) can perform an action $b$ corresponding to the normal mode
activity, it can also be
switched $on$ and $off$, as well as perform an
internal (unobservable) error transition $err$; the Monitor module $M$ is a
black box, which performs some internal logging by observing the two
actions $a$ and $b$ through the corresponding ports $a_l$ and $b_l$;
finally, the black box Controller module $C$ takes the decisions to switch
on or off the main module through the corresponding ports $on_c$ and
$off_c$, furthermore, it can perform a $test$ to check whether the main
module can be restarted.  

We want to synthesise connectors ensuring the properties below
(encoded by Boolean constraints).

\begin{itemize}
\item The main and backup actions must be logged: $\fire{a} \Leftrightarrow
  \fire{a_l}$ and $\fire{b} \Leftrightarrow \fire{b_l}$\,;
\item Only Controller can turn on the Main module: $\fire{on}
  \Leftrightarrow \fire{on_c}$\,;
\item When Controller switches off the Main module must stop operation:
  $\fire{off_c} \Rightarrow \fire{off}$ and $\fire{b} \Rightarrow
  \non{\fire{off_c}}$\,;
\item Controller can only test the execution of Backup: $\fire{test}
  \Rightarrow \fire{a}$\,;
\item Backup can only execute when Main is not possible: $\fire{a}
  \Rightarrow \inhib{b} \vee \fire{off}$\,;
\item Main can only switch off when ordered to do so or after a failure:
  $\fire{off} \Rightarrow \inhib{b} \vee \fire{off_c}$\,;
\end{itemize}

In order to compute the required glue, we take the conjunction of the above
constraints together with the \emph{progress} constraint $\fire{a} \vee
\fire{b} \vee \fire{on} \vee \fire{off} \vee \fire{test} \vee \fire{a_l}
\vee \fire{b_l} \vee \fire{off_c} \vee \fire{on_c}$ stating that at every
round some action must be taken.  In order to simplify the resulting
connectors, we also use part of the information about the behaviour of the
Main module, namely the fact that $on$, on one hand, and $b$ or $off$, on
the other, are mutually exclusive: $\act{on} \Rightarrow \inhib{b} \wedge
\inhib{off}$.  Finally, we also apply the additional axiom imposed on
Boolean constraints: $\fire{p} \Rightarrow p$.  We obtain the following
global constraint (omitting the conjunction symbol):

\begin{center}
  $\quad
  (\fire{a} \Rightarrow \fire{a_l}\ \act{a}\ \inhib{b}\
  \lor\ \fire{a_l}\ \act{a}\ \fire{off})
  (\fire{a_l} \Rightarrow \fire{a}\ \act{a_l})
  (\fire{b} \Rightarrow \fire{b_l}\ \act{b}\ \non{\fire{off_c}})
  (\fire{b_l} \Rightarrow \fire{b} \act{b_l})
  (\fire{on} \Rightarrow \fire{on_c}\ \act{on})
  (\fire{on_c} \Rightarrow \fire{on}\ \act{on_c})
  \hfill$
  \\[2pt]
  $\land\
  (\fire{off} \Rightarrow \act{off}\ \inhib{b}\
  \lor\ \fire{off_c}\ \act{off})
  (\fire{off_c} \Rightarrow \fire{off}\ \act{off_c})
  (\fire{test} \Rightarrow \fire{a}\ \act{test})$
  \\[2pt]
  $\hfill
  \land\
  (\act{on} \Rightarrow \inhib{b}\ \inhib{off})
  (\fire{a}\ \lor\ \fire{b}\ \lor\ \fire{on}\ \lor\ \fire{off}\ 
  \lor\ \fire{test}\ \lor\ \fire{a_l}\ \lor\ \fire{b_l}\ 
  \lor\ \fire{off_c}\ \lor\ \fire{on_c})\,.
  \quad$
\end{center}

Recall now that causal rules have the form $p \Rightarrow C$, where $p \in
\act{P} \cup \fire{P} \cup \inhib{P} \cup \{\true\}$ and $C$ is a DNF
Boolean formula on $\act{P} \cup \fire{P}$ without negative firing
variables.  By \prop{normal:rules}, it is sufficient to consider only the
rules with firing or $\true$ effects.  A system of causal rules is a
conjunction of such clauses.  Among the constraints above, there are two
that do not have this form: $\act{on} \Rightarrow \inhib{b} \ \inhib{off}$
and $\fire{b} \Rightarrow \fire{b_l}\ \act{b}\ \non{\fire{off_c}}$.
Rewriting them as $\inhib{on} \lor \inhib{b} \ \inhib{off}$ and
$\non{\fire{b}} \lor \fire{b_l}\ \act{b}\ \non{\fire{off_c}}$, distributing
over the conjunction of the rest of the constraints and making some
straightforward simplifications, we obtain a disjunction of three systems
of causal rules:
\begin{align*}
  \true &
  \Rightarrow \lefteqn{\fire{a}\ \inhib{b}\ \inhib{off}\ \lor\ \fire{on}}
  &&& \true &
  \Rightarrow \lefteqn{\fire{a}\ \inhib{on}\ \lor\ \fire{off}}
  &&& \true &
  \Rightarrow \lefteqn{\fire{b}\ \fire{b_l}}
  \\
  \fire{a} &\Rightarrow \fire{a_l}\ \inhib{b}
  &&& \fire{a} &\Rightarrow \lefteqn{\fire{a_l}\ \inhib{b}\ 
    \lor\ \fire{a_l}\ \fire{off}}\
  &&& \fire{a} &\Rightarrow \false
  \\
  \fire{a_l} &\Rightarrow \fire{a}
  & \fire{b} &\Rightarrow \false\quad
  & \fire{a_l} &\Rightarrow \fire{a}
  & \fire{b} &\Rightarrow \false
  & \fire{a_l} &\Rightarrow \false
  & \fire{b} &\Rightarrow \true
  \\
  \fire{on} &\Rightarrow \fire{on_c}
  &  \fire{b_l} &\Rightarrow \false
  & \fire{on} &\Rightarrow \false
  & \fire{b_l} &\Rightarrow \false
  & \fire{on} &\Rightarrow \false
  & \fire{b_l} &\Rightarrow \true
  \\
  \fire{on_c} &\Rightarrow \fire{on}
  & \fire{off} &\Rightarrow \false
  & \fire{on_c} &\Rightarrow \false
  & \fire{off} &\Rightarrow \inhib{b}\ \lor\ \fire{off_c}
  & \fire{on_c} &\Rightarrow \false
  & \fire{off} &\Rightarrow \false
  \\
  \fire{test} &\Rightarrow \lefteqn{\fire{a}}
  & \fire{off_c} &\Rightarrow \false
  & \fire{test} &\Rightarrow \fire{a}
  & \fire{off_c} &\Rightarrow \fire{off}
  & \fire{test} &\Rightarrow \false
  & \fire{off_c} &\Rightarrow \false
\end{align*}

Applying the procedure from \cite{BliSif10-causal-fmsd}, we obtain the
$\ct(\act{P} \cup \fire{P} \cup \inhib{P})$ trees in \fig{example:trees}
and connectors in \fig{example:connectors}.  In terms of classical BIP, one
can easily distinguish here two priorities: $x\ a\ a_l \less b\ b_l$ and
$x\ off \less b\ b_l$ for all $x$ not containing $off\ off_c$.  In general,
priorities are replaced by local inhibitors.  In this example, these appear
to characterise states of the Main module.  For instance,
$\fire{a}\ \fire{a_l}\ \inhib{b}\ \inhib{off}$ defines possible
interactions involving $a\ a_l$ when neither $b$ nor $off$ are possible,
\ie in state 1 (see \fig{modules}).

\begin{figure*}
  \hfill
  \begin{subfigure}[b]{34mm}
    \begin{tikzpicture}
      \node (P) at (0,0) {$\fire{a}\ \fire{a_l}\ \inhib{b}\ \inhib{off}$};
      \node (Q) at (0,-2) {$\fire{test}$};
      \node[right] at (0.6, -1) {$\oplus\quad \fire{on}\ \fire{on_c}$};
      \path[->]
      (P) edge (Q);
    \end{tikzpicture}
  \end{subfigure}
  \hfill
  \begin{subfigure}[b]{53mm}
    \begin{tikzpicture}
      \node (P) at (0,0) {$\fire{off}\ \fire{off_c}$};
      \node (Q) at (0,-1) {$\fire{a}\ \fire{a_l}$};      
      \node (R) at (0,-2) {$\fire{test}$};
      \node at (1, -1) {$\oplus$};
      \node (T) at (2,0) {$\fire{a}\ \fire{a_l}\ \inhib{b}\ \inhib{on}$};
      \node (S) at (2,-2) {$\fire{test}$};
      \node[right] at (2.6, -1) {$\oplus\quad \fire{off}\ \inhib{b}$};
      \path[->]
      (P) edge (Q) 
      (Q) edge (R)
      (T) edge (S);      
    \end{tikzpicture}
  \end{subfigure}
  \hfill
  \begin{subfigure}[b]{2em}
    \begin{tikzpicture}
      \node at (0,-1) {$\fire{b}\ \fire{b_l}$};
      \node at (0,-2) {};
    \end{tikzpicture}
  \end{subfigure}
  \hfill\mbox{}
  \mycaption{Three causal interaction trees}
  \label{fig:example:trees}
\end{figure*}

\begin{figure*}
  \centering  
  \quad
  \begin{subfigure}[b]{46mm}
    \input{control-connector-1.pspdftex}
  \end{subfigure}
  \hfill
  \begin{subfigure}[b]{76mm}
    \input{control-connector-2.pspdftex}
  \end{subfigure}
  \hfill
  \begin{subfigure}[b]{8mm}
    \input{control-connector-3.pspdftex}
  \end{subfigure}
  \quad\mbox{}
  \mycaption{Connectors corresponding to trees from \fig{example:trees}}
  \label{fig:example:connectors}
\end{figure*}


\section{Conclusion}
\label{sec:conclusion}

The work presented in this paper relies on a variation of the BIP
operational semantics based on the offer predicate introduced in
\cite{BliSif11-constraints-sc}.  Glue operators defined using the offer
predicate are isomorphic to Boolean constraints on activation and firing
port variables $\sB[\act{P}, \fire{P}]$ with an additional axiom $\fire{p}
\Rightarrow \act{p}$ \cite{BliSif11-constraints-sc}.  By considering the
negation of an activation port variable as a separate {\em negative} port
variable (keeping, however, the axiom $\act{p}\,\inhib{p} = \false$), we
reinterpret the combination of interaction and priority models on $P$ as an
interaction model on $\act{P} \cup \fire{P} \cup \inhib{P}$.  This allows
us to apply the algebraic theory, previously developed for modelling
interactions in BIP, to model interactions and priorities simultaneously.
In particular, we can synthesise such new connectors from arbitrary
$\sB[\act{P}, \fire{P}]$ Boolean formul\ae\ (in \cite{BliSif10-causal-fmsd}
we have shown how to synthesise classical connectors from formul\ae\ on
port variables without firing/activation dichotomy).

The equivalence induced by the new operational semantics on the algebras
($x \sim y \bydef{\Leftrightarrow} \intsem{x}(\bB) = \intsem{y}(\bB)$ for
any finite set of behaviours $\bB$) is weaker than the standard equivalence
induced by the interaction semantics ($x \simeq y \bydef{\Leftrightarrow}
\aisem{x} = \aisem{y}$).  Extending the axioms of the Algebra of Causal
Interaction Trees accordingly, we define normal forms for connectors and
causal interaction trees.  This, in turn, allows us to simplify the causal
rule representation, by considering only rules with firing effects.
Algebra extensions are illustrated on a connector synthesis example.

In this paper, we have only extended the axiomatisation of $\ct(\act{P}
\cup \fire{P} \cup \inhib{P})$.  Studying corresponding extensions for the
axiomatisations of other algebras as well as their completeness could be
part of the future work.  More urgently, we intend to study the differences
between the classical BIP semantics and the offer variation.  For example,
it is clear that the two semantics are equivalent on flat models.  The
divergence on hierarchical models remains to be characterised.

\bibliographystyle{eptcs}
\bibliography{glue,express,constraints,connectors,reo,bip,latex8}

\end{document}